\newtheorem{lemma}{Lemma}
\newtheorem{example}{Example}
\newtheorem{remark}{Remark}
\newcommand{\ctrobust}{\theta}
\lstdefinelanguage{OS2}{
    keywords={scenario, do},
    keywords=[2]{parallel, drive, serial, lane, position},
    keywordstyle={\color{blue}\bfseries},
    keywordstyle=[2]{\color{red!50!blue}},
    sensitive=false, 
    morecomment=[l]{\#}, 
    showstringspaces=false,
    numbers=left,
    stepnumber=1,
} %
\title{Cumulative-Time Signal Temporal Logic}
\author{Hongkai Chen}
\email{hkchen@ie.cuhk.edu.hk}
\affiliation{%
  \institution{The Chinese University of Hong Kong}
  \city{Hong Kong SAR}
  \country{China}
}
\author{Zeyu Zhang}
\affiliation{%
  \institution{Stony Brook University}
  \city{Stony Brook}
  \country{USA}}
\email{zeyu.zhang.2@stonybrook.edu}
\author{Shouvik Roy}
\affiliation{%
  \institution{Illinois Institute of Technology}
  \city{Chicago}
  \country{USA}
}
\email{sroy20@iit.edu}
\author{Ezio Bartocci}
\affiliation{%
 \institution{TU Wien}
 \city{Vienna}
 \country{Austria}}
\email{ezio.bartocci@tuwien.ac.at}
\author{Scott A. Smolka}
\affiliation{%
  \institution{Stony Brook University}
  \city{Stony Brook}
  \country{USA}}
\email{sas@cs.stonybrook.edu}
\author{Scott D. Stoller}
\affiliation{%
  \institution{Stony Brook University}
  \city{Stony Brook}
  \country{USA}}
\email{stoller@cs.stonybrook.edu}
\author{Shan Lin}
\affiliation{%
  \institution{Stony Brook University}
  \city{Stony Brook}
  \country{USA}}
\email{shan.x.lin@stonybrook.edu}
\begin{document}

\begin{abstract}

Signal Temporal Logic (STL) is a widely adopted specification language in cyber-physical systems for expressing critical temporal 
requirements, such as safety conditions and response time.  However, STL's expressivity is not sufficient to capture the cumulative duration during which a property holds within an interval of time. To overcome this limitation, we introduce \emph{Cumulative-Time Signal Temporal Logic} (CT-STL) that operates over discrete-time signals and extends STL with a new \emph{cumulative-time} operator. This operator compares the sum of all time steps for which its nested formula is true with a threshold.  We present both a qualitative and a quantitative (robustness) semantics for CT-STL and prove both their soundness and completeness properties.  We provide an efficient online monitoring algorithm for both semantics. Finally, we show the applicability of CT-STL in two case studies: specifying and monitoring cumulative temporal requirements for a microgrid and an artificial pancreas.

\end{abstract}

\keywords{Cyber-Physical Systems, Signal Temporal Logic, Runtime Verification, Monitoring, Cumulative Temporal Properties}

\maketitle

\section{Introduction}
\label{sec:intro}

The use of formal specification languages and verification
tools is critical in expressing, monitoring, and ensuring important reliability, security, and safety properties in cyber-physical systems~(CPS).  
These systems, which integrate computational algorithms with physical processes, operate in dynamic environments where timing constraints are often as critical as functional correctness.
Signal Temporal Logic (STL)~\cite{maler2004monitoring} is a well-established formalism that allows engineers to specify time-dependent behaviors of real-time systems. 
STL is widely adopted in industry~\cite{MolinANZBE23,Kapinski2016STLibAL,RoehmHM17,chen2022stl} and has been used for a number of verification and validation activities on CPS, including: specification-based monitoring/runtime verification~\cite{BartocciDDFMNS18,JaksicBGKNN15}, control~\cite{chen2023stl}, testing~\cite{BartocciMNY23}, falsification analysis~\cite{FainekosH019}, fault localization~\cite{BartocciFMN18}, and failure explanation~\cite{BartocciMMMN21}.
Despite its versatility, STL lacks the ability to reason about cumulative temporal properties: those that depend on the aggregated duration of events over intervals. 
This limitation becomes particularly acute in applications where accumulated sustained compliance, rather than momentary satisfaction of requirements, determines system safety or efficiency.

Consider a distributed energy resource (DER) integrated into a power grid. Regulatory standards, such as the IEEE Standard 1547-2018~\cite{IEEE1547-2018}, mandate not only that voltage levels avoid instantaneous over-voltage thresholds but also that the cumulative duration of such exceedances remains within strict limits over a time window; e.g., 1.6 ms of overvoltage above 1.7 per unit of nominal instantaneous peak base within a one-minute interval. 
Similarly, in medical devices such as artificial pancreas systems, maintaining blood glucose within safe bounds is insufficient if hypoglycemic episodes, though individually brief, occur too frequently. 
These examples underscore a pervasive challenge: many CPS requirements are inherently cumulative, demanding formalisms that can quantify how long a condition holds---or fails to hold---within a timeframe. 
STL, however, cannot specify requirements involving accumulated durations.

Prior efforts to address this gap have explored augmenting STL’s robustness semantics with algebraic structures like the semiring (min,+)~\cite{JaksicBGN18} or averaging operators~\cite{takumi2015}.
While these approaches enable cumulative reasoning to some extent, they often impose syntactic restrictions~(e.g., requiring formulas in positive normal form) or obscure the intent of specifications with mathematical abstractions~(e.g., convolution kernels~\cite{SilvettiNBB18}).
Such compromises hinder usability, particularly for domain experts unfamiliar with advanced algebraic constructs. 
In contrast, our work introduces a  syntactic extension to STL---a dedicated cumulative-time operator---that preserves STL’s intuitive syntax while enabling direct expression of cumulative requirements.
This operator $\mathbf{C}^\tau_I$   asserts that a formula holds for at least $\tau$ time units within the time interval $I$.

In this paper, we present a novel logic called Cumulative-Time Signal Temporal Logic~(CT-STL) for specifying cumulative-time requirements in cyber-physical systems. 
We extend STL with a cumulative-time operator $\mathbf{C}^\tau_I$ that specifies the cumulative amount of time a property holds in a given time interval $I$ is greater or equal to $\tau$.
Additionally, we define both a qualitative and a quantitative (robustness) semantics for CT-STL and prove the soundness and completeness of the robustness semantics.
We provide an efficient online monitoring algorithm for both semantics. Finally, we demonstrate the utility of CT-STL in two case studies: specifying and monitoring cumulative temporal requirements for a microgrid and an artificial pancreas.
Here, we use an example to illustrate the expressivity of CT-STL.

\begin{figure}[t]
    \centering
    \includegraphics[width=0.5\linewidth]{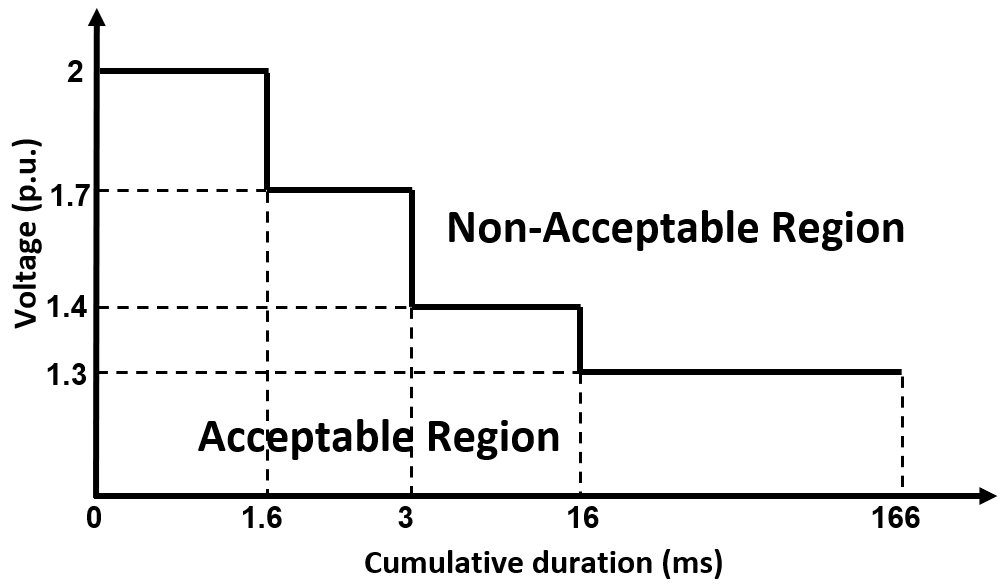}
    \caption{The acceptable region specifies upper bounds for the cumulative time of different overvoltage magnitudes over any one-minute time window.
    This figure is redrawn from the IEEE Standard 1547-2018~\cite{IEEE1547-2018}.}
    \label{fig:ieee-std}
\end{figure}

\begin{example}[IEEE Standard]\label{example:1}
To illustrate CT-STL's expressivity and utility, consider the power quality criteria and requirements for interconnection of distributed energy resources (DERs) with
electric power systems (EPSs) given in the IEEE Standard 1547-2018~\cite{IEEE1547-2018}.
Figure~\ref{fig:ieee-std} illustrates a limitation of overvoltage contribution; i.e., the DER shall not cause the instantaneous voltage on any portion of the Area EPS to exceed the given
magnitudes and cumulative durations. The cumulative time includes the
sum of durations for which the instantaneous voltage exceeds the respective threshold over any one-minute time window. 


In Figure~\ref{fig:ieee-std}, the acceptable region delineates the maximum cumulative durations allowed for various levels of instantaneous overvoltage within any time window of one minute. Specifically, overvoltages exceeding 2.0 per unit (p.u.) of nominal instantaneous peak base are strictly prohibited. Further, the cumulative duration during which the voltage exceeds 1.7 p.u. must not exceed 1.6 ms.
Conversely, this implies that the voltage must remain below 1.7 p.u. for at least (60,000 ms – 1.6 ms) within any one-minute window. Similar cumulative time constraints apply to other threshold levels, forming a descending staircase of allowable overvoltage durations as the voltage magnitude decreases.
Let the step size be $0.1$ ms.
We can specify the acceptable region using a CT-STL formula $\varphi$ as follows.
$$
    \varphi =\; \mathbf{G}_{[0,T]}\; (v \leq 2 
    \;\wedge\; 
    \mathbf{C}_{[0,600000]}^{600000-16}\,v < 1.7 
    \;\wedge\; \mathbf{C}_{[0,600000]}^{600000-30}\,v < 1.4 
    \;\wedge\; \mathbf{C}_{[0,600000]}^{600000-160}\,v < 1.3 )
$$

where $v$ is the absolute value of voltage and $T$ is the operation time of EPSs. The subscript $[0,600000]$ of the \textbf{C} operator represents any time window of $1$ minute ($60000$ ms) during which the subformula is evaluated. Each superscript of the \textbf{C} operator (e.g, $600000-16$) denotes the minimum time the corresponding voltage control (e.g, $v<1.7$) should hold during the time window.

\end{example}

\paragraph{Our contributions} The main contributions of this work are the following:
\begin{itemize}
    \item We propose a new temporal logic called CT-STL that operates on a set of mixed analog signals that we assume to be uniformly sampled. CT-STL extends the STL syntax with a new cumulative time operator $\mathbf{C}^{\tau}_I \varphi$ that sums all of the time durations for which its nested formula $\varphi$ is satisfied over an interval $I$ and compares the result with a threshold of time $\tau$.
    \item We extend the classical notion of STL quantitative semantics (robustness)~\cite{DonzeFM13} to handle the quantitative evaluation of the new cumulative time operator.
    In particular, when a formula includes a cumulative-time operator, the quantitative semantics returns the maximum robustness value of the nested formula within the operator’s time interval.
    
    \item We prove both the soundness and  completeness of the newly defined quantitative semantics.
    \item We provide an online monitoring algorithm for CT-STL.
    \item We demonstrate the utility of our logic and the monitoring algorithm on two case studies.
\end{itemize}

\paragraph{Paper organization} Section~\ref{sec:related-work} considers related work while Section~\ref{sec:background} provides the necessary background.  Sections~\ref{sec:syntax} and~\ref{sec:semantics} define the CT-STL syntax and qualitative/quantitative semantics, respectively.  Section~\ref{sec:sound_complete} provides the proof of  soundness and completeness of CT-STL's quantitative semantics with respect to its qualitative semantics. Section~\ref{sec:monitoring} shows how to extend the online STL monitoring algorithm proposed in~\cite{2017monitor} to handle the cumulative-time operator in CT-STL.  Section~\ref{sec:experiment} presents our two case studies for which the monitoring of cumulative time requirements is critical: a microgrid model subject to a short-circuit simulation and the artificial pancreas systems.  Section~\ref{sec:conclusion} offers our concluding remarks.

\section{Related Work}\label{sec:related-work}

The work by Maler et al.~\cite{maler2004monitoring} introduced Signal Temporal Logic (STL), a dense-time temporal logic that extends a fragment of Metric Interval Temporal Logic~\cite{AlurFH96}.  STL syntax includes predicates over dense-time continuous signals enabling the expression of temporal properties over Boolean and piecewise linear signals.   One of the key features of STL is the possibility of evaluating a set of mixed analog signals against a temporal logic requirement
both qualitatively (it violates/satisfies) and quantitatively~\cite{DonzeFM13} by providing a real value of how much the system satisfies/violates the desired temporal property. 

A main drawback of the classical syntax and semantics of STL is the impossibility of evaluating cumulative temporal properties over time. For example, the standard syntax does not allow expressive cumulative behaviors and the classical quantitative semantic---based on (min,max) semiring---fails to distinguish an isolated single glitch in a signal from more severe, repeated violations. Over the past decade, several works~\cite{zhao2022,haghighi2019control,JaksicBGN18,RodionovaBNG16}  have attempted to address this limitation at the semantic level by replacing the Boolean semiring with alternative algebraic structures, such as, for example, the semiring (min,+).  Haghighi et al.~\cite{haghighi2019control} propose a smooth cumulative robustness designed to efficiently compute control policies. Zhao et al.~\cite{zhao2022} define ASTL, an accumulative robustness metric for STL to support IoT service monitoring throughout the time domain. However, replacing an idempotent operator such as logical $\vee$ with nonidempotent ones such as $+$ introduces challenges. It may impose the limitation of the original STL syntax to \emph{positive normal form}~\cite{RodionovaBNG16} or require special monitoring frameworks~\cite{JaksicBGN18}.  In contrast to these works, our approach operates at STL syntax level: we add a new operator for reasoning about cumulative temporal properties directly in the syntax of the logic, and thus our logic is more expressive than STL.  
Takumi et al.~\cite{takumi2015} propose AvSTL, an extension of the temporal logic of metric intervals by averaged temporal operators. In contrast to~\cite{takumi2015}, our cumulative temporal operator does not average with respect to the time interval, and our notion of robustness provides the largest value of robustness
in the interval of the cumulative time operator
instead of the cumulative robustness. For CT-STL formulas without a cumulative time operator, our quantitative semantics corresponds to the classical STL quantitative semantics. 

A relevant related work is Signal Convolution Logic (SCL) introduced in~\cite{SilvettiNBB18}. SCL replaces the temporal operators with convolution operators with respect to a chosen kernel. In SCL, it is possible to express
cumulative-time properties by choosing the proper kernel. The authors of~\cite{SilvettiNBB18} proved that SCL is actually more expressive than the fragment of STL with only eventually/always operators and without until. The limitations of SCL are that it does not have the possibility to express the temporal until operator (which cannot be expressed with a kernel) and that the user needs to be familiar with the notion of kernel and convolution in order to effectively specify properties.  In contrast, CT-STL has the temporal until operator in its syntax and the cumulative-time operator hides the complexity of using specific kernels.

\section{Background}\label{sec:background}

In this section, we provide background on the syntax and semantics
of STL and its online monitoring algorithms.
Let $\xi:\mathbb{T}\rightarrow\mathbb{R}^n$ be a signal where $\mathbb{T} = \mathbb{Z}_{\geq 0}$ is the (discrete) time domain. 
We describe $\xi$ by a set of $n$ variables $\xi=\{\xi_1,\xi_2,\ldots,\xi_n\}$ and denote by $|\xi|$ the length of $\xi$.

\subsection{Signal Temporal Logic}\label{subsec:stl}

STL is a logic for specifying temporal properties over real-valued signals.
An STL atomic proposition $p\in \mathit{AP}$ is defined over $\xi$ with the form $p\equiv \mu(\xi(t)) \geq c$, 
$c\in\mathbb{R}$, and $\mu:\mathbb{R}^n\rightarrow\mathbb{R}$.  STL formulas $\varphi$ are defined according to the following grammar~\cite{donze2010robust}:
\begin{align*}
    \varphi \,::=\, 
    p\;|\; \neg\,\varphi\;|\; \varphi_1\,\wedge\,\varphi_2 \;|\;     \varphi_1\,\mathbf{U}_I\,\varphi_2 \;
\end{align*}
\noindent where $\mathbf{U}$ is the \emph{until} operator and $I$ is an interval on $\mathbb{T}$. Logical disjunction is derived from logical conjunction $\wedge$ and Boolean negation $\neg$ as usual, and temporal operators \emph{eventually} and \emph{always} are derived from $\mathbf{U}$ as usual: $\mathbf{F}_{I}\varphi = \top\,\mathbf{U}_I\,\varphi$ and $\mathbf{G}_{I}\varphi = \neg\, (\mathbf{F}_I \neg\,\varphi)$.

The satisfaction relation $(\xi,t)\models\varphi$, indicating whether $\xi$ satisfies $\varphi$ at time $t$, is defined as follows:
\begin{align*}
&(\xi, t) \models p &\Leftrightarrow &\hspace{2ex} \mu(\xi(t)) \geq c\\
&(\xi, t) \models \neg \varphi &\Leftrightarrow &\hspace{2ex} \neg ((\xi, t) \models \varphi)\\
&(\xi,t) \models \varphi_1 \wedge \varphi_2  &\Leftrightarrow&\hspace{2ex} (\xi,t) \models \varphi_1 \wedge (\xi,t) \models \varphi_2 \\
&(\xi,t) \models \varphi_1 \mathbf{U}_I \varphi_2 &\Leftrightarrow&\hspace{2ex} \exists\ t'\in t + I\ \text{s.t.}\ (\xi,t') \models \varphi_2 \wedge \forall\; t''\in [t,t'),\ (\xi,t'') \models \varphi_1 
\end{align*}

STL admits a quantitative (robustness) semantics given by a real-valued function $\rho$ such that $\rho(\varphi,\xi,t)>0 \Rightarrow (\xi,t) \models \varphi$, and defined as follows~\cite{donze2010robust}:
\begin{align*}
&\hspace{-3ex}\rho(\mu(\xi(t)) \geq c, \xi, t)&&\hspace{-7ex}=\hspace{2ex} \mu(\xi(t))-c   \\
&\hspace{-3ex}\rho(\neg \varphi, \xi, t)&&\hspace{-7ex}=\hspace{2ex} -\rho(\varphi, \xi, t) \\
&\hspace{-3ex}\rho(\varphi_1 \wedge \varphi_2, \xi, t) &&\hspace{-7ex}=\hspace{2ex} \min(\rho(\varphi_1, \xi, t),\rho(\varphi_2, \xi, t))\\
&\hspace{-3ex}\rho(\varphi_1\mathbf{U}_I\varphi_2, \xi, t) &&\hspace{-7ex}=\hspace{2ex} \max_{t'\in t+I}~\min (\rho(\varphi_2, \xi, t'),\min_{t''\in [t,t+t')}\rho(\varphi_1, \xi, t''))
\end{align*}

\subsection{Online Monitoring for STL}\label{subsec:monitor_stl}

Online monitoring algorithms for STL enable real-time evaluation of temporal requirements over streaming signals, a critical capability for safety-critical CPSs. 
Unlike offline methods that analyze complete traces, online approaches incrementally compute satisfaction or robustness values as partial signal data becomes available. 
STL quantitative semantics, which assigns a robustness degree indicating how much a signal satisfies or violates a specification, quantifies the margin of satisfaction, enabling early termination of monitoring. 
Deshmukh et al.~\cite{2017monitor} proposed online monitoring using robust satisfaction intervals~(RoSI), which track the range of possible robustness values as signals evolve. 
Their approach leverages sliding windows and formula decomposition to incrementally update evaluations, ensuring logarithmic complexity for operators like always and eventually. 
Other monitoring algorithms focused on tasks such as hardware-based monitors~\cite{jakvsic2018quantitative}, predictive monitoring~\cite{lindemann2023conformal}, and runtime enforcement~\cite{sun2024redriver}. 
These works have been evaluated in domains like autonomous systems and medical devices, where online satisfaction monitoring of STL specification can significantly benefit safety. 


\section{Syntax of CT-STL}\label{sec:syntax}

The cumulative-time over a time interval (window) where an STL formula $\varphi$ holds relative to a signal $\xi$ is constrained by some bounds in the example.
In what follows, we introduce the Cumulative-Time Signal Temporal Logic (CT-STL), an extension to STL. 
We define the syntax of CT-STL, describing in detail the cumulative-time operator and CT-STL expressiveness.

An atomic proposition $p$ of Cumulative-Time Signal Temporal Logic (CT-STL) has an identical definition as that of STL~\cite{maler2004monitoring}: $p\in \mathit{AP}$ 
is defined over $\xi$ and is of the form $p\equiv \mu(\xi(t)) \geq c$, 
$c\in\mathbb{R}$, and $\mu:\mathbb{R}^n\rightarrow\mathbb{R}$.
CT-STL formulas $\varphi$ are defined using the following grammar:
\begin{align*}
    \varphi \,::=\, 
    p\;|\; \neg\,\varphi\;|\; 
    \varphi_1\,\wedge\,\varphi_2 \;|\;     \varphi_1\,\mathbf{U}_I\,\varphi_2 \;|\; \mathbf{C}^{\tau}_I \varphi
\end{align*}
\noindent where $I$ is an interval on $\mathbb{T}$, $\tau\in\mathbb{R}_{>0}$, $\tau\leq |I|$, $\mathbf{U}$ is the \emph{until} operator, and $\mathbf{C}$ is the \emph{cumulative-time} operator.
Logical disjunction is derived from $\wedge$ and $\neg$, and operators \emph{eventually} and \emph{always} are derived from $\mathbf{U}$ as in STL: 
${\mathbf{F}}_{I}\varphi =  \top\,{\mathbf{U}}_I\,\varphi, {\mathbf{G}}_{I}\varphi = \neg\, ({\mathbf{F}}_I \neg\,\varphi)$.

The 
operator $\mathbf{C}^{\tau}_I \varphi$ asserts that 
the cumulative amount of time that property $\varphi$ is satisfied over the time interval $I$ is greater than or equal to $\tau$.
We introduce shorthands ${\mathbf{F}}^{\tau}_{I_1,I_2}\varphi = \mathbf{F}_{I_1}\,\mathbf{C}^{\tau}_{I_2}\varphi$, ${\mathbf{G}}^{\tau}_{I_1,I_2}\varphi = \mathbf{G}_{I_1}\,\mathbf{C}^{\tau}_{I_2}\varphi$, and $\varphi_1\,\mathbf{U}^{\tau}_{I_1,I_2}\,\varphi_2 = \varphi_1\,\mathbf{U}_{I_1}\,\mathbf{C}^{\tau}_{I_2}\varphi_2$.
For example, the derived ${\mathbf{F}}^{\tau}_{I_1,I_2}$ operator asserts that there exists a time point in $I_1$ such that the cumulative amount of time property $\varphi$ is satisfied over the subsequent time interval $I_2$ is larger than or equal to $\tau$.


\begin{remark}[Expressiveness of CT-STL]\label{remark:expressiveness}
It is obvious that any STL formula can be expressed in CT-STL, which is an extension of STL.
On the other hand, neither STL, nor any other temporal logic that we are aware of, can express cumulative-time requirements.
The reason is the absence of a temporal operator representing the accumulation of time that some property is satisfied over a time interval. 
For example, let signal $\xi_1$ have positive values
a total of $10$ time units over the time interval $[0, 20]$; $\xi_2$ has a total of $15$ such time units.
STL cannot express or distinguish their difference in such cumulative time: no STL formula definitely holds w.r.t.\ $\xi_1$ but not $\xi_2$ at time $0$.
Therefore, CT-STL is strictly more expressive than STL.
\end{remark}

\begin{remark}[Specifying cumulative-time upper bounds]\label{remark:cumulative_time}
$\mathbf{C}^{\tau}_I \varphi$ asserts that the cumulative amount of time property $\varphi$ is satisfied over the time interval $I$ is greater than or equal to $\tau$.
Hence, $\tau$ is a lower bound on the cumulative time.
A corresponding upper bound $\tau'$ can be specified with the formula
$\mathbf{C}^{|I|-\tau'}_I \neg\varphi$.
Here, we take advantage of the fact that the sum of the cumulative amounts of time a property $\varphi$ and its negation $\neg\varphi$ hold over a time interval $I$ is equal to the length of $I$, denoted $|I|$.
Therefore, an upper bound of 
$\tau'$ for the cumulative amount of time over $I$ that $\varphi$ holds implies a lower bound of $|I|-\tau'$ for the cumulative time over $I$ that $\neg\varphi$ holds.

\end{remark}

In Example~\ref{example:1}, we define cumulative-time upper bounds for overvoltage occurrences by equivalently specifying lower bounds on the cumulative duration during which overvoltage does \emph{not} occur.
This formulation leverages Remark~\ref{remark:cumulative_time}; e.g.,  limiting the cumulative time of $v>1.7$ to less than 16 ms ensures that $v<1.7$ holds for at least (60,000 ms - 16 ms), thus formalized as $\mathbf{C}_{[0,600000]}^{600000-16}\,v < 1.7$.
Next, we demonstrate the utility of the cumulative-time operator in microgrids, where a voltage-related requirement must hold for at least a specified duration to ensure safe operation.

\begin{example}
    Consider a scenario of {load shaving} in a {photovoltaic (PV) system}. During periods of high power generation, the system is prone to {overvoltage}, which can lead to failures in the microgrid. This type of overvoltage, typically within {10\% above the nominal voltage}, is less severe than the case in Example~\ref{example:1}, where the issue is often caused by a microgrid fault.
    A potential mitigation strategy is to employ a Battery Energy Storage System (BESS) to absorb excess power whenever such overvoltage occurs. The BESS continues to charge until the voltage stabilizes.
    
    Let $x_1$ denote the {BESS charging rate} (in kW), and $x_2$ denote the voltage signal. Assuming a time step of 1 second, we can formalize the desired behavior using a CT-STL formula as follows:
    \[
    \varphi = x_1 > 100 \ \mathbf{U}_{[0,7200]} \ \mathbf{C}_{[0,3600]}^{3000} \ x_2 < 1.1
    \]
    This formula specifies that the BESS charging rate should remain above {100 kW}, \emph{until} some time $t$ within the near future (e.g., 2 hours, which is $7200$ seconds), where the cumulative amount of time over the following hour (i.e., $[0, 3600]$) during which the voltage $x_2$ remains below {1.1 p.u.} exceeds {3000 seconds} (i.e., 50 minutes). In other words, the BESS charges to absorb excess power until the voltage is stable for at least 50 minutes in a 1-hour window.
\end{example}
\section{Semantics of CT-STL}\label{sec:semantics}



In this section, we provide both qualitative and quantitative (robustness) semantics for CT-STL.
The satisfaction relation $(\xi,t)\models\varphi$, indicating whether $\xi$ satisfies $\varphi$ at time $t$, is defined 
as follows:

\begin{align*}
&(\xi, t) \models p &\Leftrightarrow &\hspace{2ex} \mu(\xi(t)) \geq c\\
&(\xi, t) \models \neg \varphi &\Leftrightarrow &\hspace{2ex} \neg ((\xi, t) \models \varphi)\\
&(\xi,t) \models \varphi_1 \wedge \varphi_2  &\Leftrightarrow&\hspace{2ex} (\xi,t) \models \varphi_1 \wedge (\xi,t) \models \varphi_2 \\
&(\xi,t) \models \varphi_1 \mathbf{U}_I \varphi_2 &\Leftrightarrow&\hspace{2ex} \exists\ t'\in t + I\ \text{s.t.}\ (\xi,t') \models \varphi_2 \wedge \forall\; t''\in [t,t'),\ (\xi,t'') \models \varphi_1 \\
&(\xi,t) \models \mathbf{C}^{\tau}_I \varphi &\Leftrightarrow&\hspace{2ex} \sum_{t'\in t + I}  [(\xi,t') \models \varphi] \geq \tau
\end{align*}
where function $[(\xi,t)\models\varphi]$ is $1$ if the statement is true, $0$ if the statement is false~\cite{knuth2003concrete}.

The following definition extends the notion of robustness for STL~\cite{donze2010robust} to CT-STL.
Essentially, robustness of a CT-STL formula quantifies the perturbations allowed to the values of atomic propositions evaluated along a signal---and hence, indirectly, the perturbations allowed to the signal itself---such that the Boolean truth value of the CT-STL formula with respect to the signal is preserved.

\begin{definition}[Robustness]\label{def:robustness}
    Let $\varphi$ be a CT-STL formula, $\xi:\mathbb{T}\rightarrow\mathbb{R}^n$ a signal, $c\in\mathbb{R}$, $\mu:\mathbb{R}^n\rightarrow\mathbb{R}$, and $I$ be an interval on $\mathbb{T}$, 
    $\tau\in\mathbb{R}_{>0}$, and $\tau\leq |I|$.
    The 
    robustness $\ctrobust$ of $\varphi$ at time $t$ is defined as follows.
    \begin{align*}
        &\ctrobust(p,\xi,t)&=&\hspace{1em}  \mu(\xi(t)) - c \\
        &\ctrobust(\neg\varphi,\xi,t)&=&\hspace{1em} -\ctrobust(\varphi,\xi,t)\\
        &\ctrobust(\varphi_1\wedge\varphi_2,\xi,t)&=&\hspace{1em}\min(\ctrobust(\varphi_1,\xi,t),\ctrobust(\varphi_2,\xi,t))\\
        &\ctrobust(\varphi_1\mathbf{U}_I\varphi_2,\xi,t)&=&\hspace{1em}\max_{t'\in t+I} \min(\ctrobust(\varphi_2,\xi,t'), \min_{t''\in [t,t')}\ctrobust(\varphi_1,\xi,t''))\\
        &\ctrobust(\mathbf{C}^{\tau}_I \varphi,\xi,t) &=&\hspace{1em}\underset{t'\in t+I}{\textstyle\max^{\tau}}\;\ctrobust(\varphi,\xi,t')
    \end{align*}
where $\max^{\tau}$ is the $\lceil\tau\rceil$-th largest value of a finite real-valued list, which can be found by sorting the list (e.g., using Heapsort, which has a worst-case time complexity of $O(n \log n)$~\cite{bollobas1996best}).
\end{definition}

In particular, the robustness of $\mathbf{C}^\tau_I \varphi$ at time $t$ is defined to be the $\lceil\tau\rceil$-th largest robustness value of $\varphi$ over the time interval $t+I$. 
As a result, perturbations to the values of atomic propositions evaluated along the signal less than $\ctrobust$ do not affect the Boolean truth value of $\varphi$ at the time points at which the robustness is greater than $\ctrobust$.
This ensures the cumulative amount of time for which the robustness value is positive (i.e., property $\varphi$ is satisfied) is larger than or equal to $\tau$; therefore the Boolean truth value of $\mathbf{C}^{\tau}_I \varphi$ is preserved.

\begin{definition}[Secondary signals~\cite{donze2010robust}]
    Let $\xi:\mathbb{T}\rightarrow\mathbb{R}^n$ be a signal and $p = \mu(\xi(t))\geq c\in AP$ an atomic proposition of CT-STL. 
    We call $y(t) = \mu(\xi(t)) - c: \mathbb{T}\rightarrow\mathbb{R}$
    a secondary signal of $\xi$ w.r.t.\ $p$.
\end{definition}

For instance, given a two-dimensional signal $\xi=\{\xi_1,\xi_2\}$ and 
atomic proposition $x_1+2x_2\geq 3$, 
$y(t)=\xi_1(t) + 2\xi_2(t) - 3$ is a secondary signal of $\xi$ w.r.t.\ $x_1+2x_2-3 \geq 0$.

The \textit{characteristic function} $\chi$ of $\varphi$ relative to $\xi$ at time $t$ is defined such that $\chi(\varphi, \xi, t)=1$ when $(\xi,t)\models \varphi$ and $-1$ otherwise.
Its definition in the base case (i.e., for an atomic proposition) is $\chi(p, \xi, t)=sign(\mu(\xi(t)) - c)$, where we interpret $sign(0)$ as $1$~\cite{donze2010robust}.
Let the point-wise distance between two finite signals of the same length be $||y-y'||=\max_t|y[t]-y'[t]|$.
The following theorem about CT-STL robustness is an extension of the results of Fainekos et al.~\cite{fainekos2009robustness} and Donz{\'e} et al.~\cite{donze2010robust}.
{In the former, it is proven that the evaluation of the robust semantics of a Metric Temporal Logic (MTL) formula can be bounded by its robustness
degree, whereas a similar property of STL robustness is presented in the latter without a formal proof. Here, we provide a theorem regarding the CT-STL robustness property, followed by a formal proof.}

\begin{theorem}[Property of Robustness]\label{theorem:preservation}
Let $\varphi$ be a CT-STL formula with a set of atomic propositions $AP=\{p_1,\ldots,p_k\}$;
also, let $\xi:\mathbb{T}\rightarrow\mathbb{R}^n$ be a finite signal, and $t\in\mathbb{T}$. 
For all $\xi'$ in which all secondary signals satisfy $||y_j-{y_j}'||<|\ctrobust(\varphi, \xi, t)|$, $j=1,\ldots,k$,
we have $\chi(\varphi, \xi, t) = \chi(\varphi, \xi', t)$.
\end{theorem}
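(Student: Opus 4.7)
The plan is to prove Theorem~\ref{theorem:preservation} by structural induction on the CT-STL formula $\varphi$, matching the grammar case-by-case against the clauses of Definition~\ref{def:robustness}. The overall invariant to maintain is that whenever every secondary signal is perturbed by strictly less than $|\ctrobust(\varphi,\xi,t)|$, each subformula whose robustness magnitude bounds the perturbation must retain its characteristic value under that perturbation.

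For the base case $p\equiv \mu(\xi(t))\geq c$, I would observe that $\ctrobust(p,\xi,t)=y(t)$ is exactly the secondary signal value, so the hypothesis $\|y-y'\|<|y(t)|$ forces $y'(t)$ to share the sign of $y(t)$ (taking $\mathrm{sign}(0)=1$ as in the definition), yielding $\chi(p,\xi,t)=\chi(p,\xi',t)$. The Boolean cases are then routine: for $\neg\varphi$ one uses $|\ctrobust(\neg\varphi,\xi,t)|=|\ctrobust(\varphi,\xi,t)|$ and the inductive hypothesis; for $\varphi_1\wedge\varphi_2$ one splits on whether $\chi$ is $+1$ or $-1$, noting that a positive $\ctrobust=\min(\ctrobust_1,\ctrobust_2)$ forces both sub-robustnesses to exceed $|\ctrobust|$, while a non-positive $\ctrobust$ is witnessed by whichever conjunct attains the minimum, whose magnitude equals $|\ctrobust|$ and whose sign is therefore preserved by the IH. For $\varphi_1\mathbf{U}_I\varphi_2$, I would follow the standard Fainekos-style witness argument: when $\ctrobust>0$, extract the maximizing $t^{*}\in t+I$ and apply the IH along $\varphi_2$ at $t^{*}$ and along $\varphi_1$ on $[t,t^{*})$; when $\ctrobust<0$, for every candidate $t'\in t+I$ there is either a $\varphi_2$-witness at $t'$ or a $\varphi_1$-witness in $[t,t')$ whose sub-robustness has magnitude at least $|\ctrobust|$, and the IH again propagates the sign.

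The main obstacle, and the genuinely new step, is the cumulative-time case $\mathbf{C}^{\tau}_I\varphi$, because here the definition of $\ctrobust$ is not expressed by a min/max combinator but by an order-statistic $\max^{\tau}$. The key lemma I would isolate is: if $r_1\geq r_2\geq\cdots\geq r_N$ are the sorted values of $\ctrobust(\varphi,\xi,\cdot)$ over $t+I$, then $\ctrobust(\mathbf{C}^{\tau}_I\varphi,\xi,t)=r_{\lceil\tau\rceil}$, and a time point $t'\in t+I$ satisfies $\varphi$ iff $\ctrobust(\varphi,\xi,t')\geq 0$. With this in hand, I would split on the sign of $r_{\lceil\tau\rceil}$. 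If $r_{\lceil\tau\rceil}>0$, the top $\lceil\tau\rceil$ indices carry $|\ctrobust(\varphi,\xi,\cdot)|\geq r_{\lceil\tau\rceil}=|\ctrobust(\mathbf{C}^{\tau}_I\varphi,\xi,t)|$, so the IH pins their characteristic at $+1$ under $\xi'$, giving at least $\lceil\tau\rceil\geq\tau$ satisfying points and hence $\chi(\mathbf{C}^{\tau}_I\varphi,\xi',t)=+1$. Symmetrically, if $r_{\lceil\tau\rceil}<0$, the bottom $N-\lceil\tau\rceil+1$ indices all have $|\ctrobust(\varphi,\xi,\cdot)|\geq |\ctrobust(\mathbf{C}^{\tau}_I\varphi,\xi,t)|$, so the IH pins their characteristic at $-1$ under $\xi'$, leaving at most $\lceil\tau\rceil-1<\tau$ potentially satisfying points, so $\chi(\mathbf{C}^{\tau}_I\varphi,\xi',t)=-1$. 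The boundary case $\ctrobust=0$ is vacuous because no perturbation is strictly less than zero.

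The subtle point I expect to have to be careful about is the treatment of ties and of the convention $\mathrm{sign}(0)=1$: when $r_{\lceil\tau\rceil}<0$, some of the top values could be exactly $0$ and still count as satisfying, but this is consistent with the count bound $\lceil\tau\rceil-1<\tau$; conversely, when $r_{\lceil\tau\rceil}>0$, I do not need strict inequality below because the IH only needs $|\ctrobust(\varphi,\xi,t')|>\|y-y'\|$ at each of the top $\lceil\tau\rceil$ points, which follows from $\|y-y'\|<|\ctrobust(\mathbf{C}^{\tau}_I\varphi,\xi,t)|\leq r_i$. Once the $\mathbf{C}$ case is settled, the induction closes and the theorem follows, with the immediate corollary $\ctrobust(\varphi,\xi,t)>0\Rightarrow(\xi,t)\models\varphi$ recovering the soundness direction used elsewhere in the paper.
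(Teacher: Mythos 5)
Your proposal is correct and follows essentially the same route as the paper: structural induction with the standard sign-preservation arguments for the Boolean and until cases, and for $\mathbf{C}^{\tau}_I\varphi$ the same order-statistic counting argument (the top $\lceil\tau\rceil$ points when $\ctrobust>0$, the bottom $|I|-\lceil\tau\rceil+1$ points when $\ctrobust<0$, each with robustness magnitude at least $|\ctrobust(\mathbf{C}^{\tau}_I\varphi,\xi,t)|$ so the induction hypothesis pins their characteristic values). Your explicit handling of the $\ctrobust=0$ boundary as vacuous and of ties under the $\mathrm{sign}(0)=1$ convention is a minor refinement of, not a departure from, the paper's argument.
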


\begin{proof}
    The proof is by induction on the structure of $\varphi$.
    \begin{itemize}
        \item \textbf{Atomic Propositions} $\varphi=p$: We have $\ctrobust(p,\xi,t)= \mu(\xi(t))-c$.
        By definition, 
        for all $\xi'$ in which all secondary signals satisfy $||y_j-{y_j}'||<|\ctrobust(\varphi, \xi, t)|$, we have 
        \begin{align*}
            |\ctrobust(p,\xi,t)-\ctrobust(p,\xi',t)|& = |\mu(\xi(t))-\mu(\xi'(t))| \\
            &\leq \max_t|\mu(\xi)-\mu(\xi')| \\
            & = ||\mu(\xi)-\mu(\xi')|| 
             = ||y_j-{y_j}'||\\
            & <|\ctrobust(p,\xi,t)|
        \end{align*}
        This suggests 
        $\ctrobust(p,\xi,t)$ and $\ctrobust(p,\xi',t)$ have the same sign.
        Therefore,  $\chi(p, \xi, t) = \chi(p, \xi', t)$.
        \item \textbf{Negation} $\varphi=\neg\varphi_1$: 
        By the induction hypothesis, for all $\xi'$ in which all secondary signals satisfy $||y_j-{y_j}'||<|\ctrobust(\varphi, \xi, t)|=|\ctrobust(\varphi_1, \xi, t)|$, we have $\chi(\varphi_1,\xi,t)=\chi(\varphi_1,\xi',t)$.
        Thus,
        \begin{align*}
-\chi(\varphi_1,\xi,t)=-\chi(\varphi_1,\xi',t)&\Rightarrow\\
\chi(\neg\varphi_1,\xi,t)=\chi(\neg\varphi_1,\xi',t)&\Rightarrow\\
\chi(\varphi,\xi,t)=\chi(\varphi,\xi',t)&
        \end{align*}
        \item \textbf{Conjunction} $\varphi=\varphi_1\wedge\varphi_2$: 
        Without loss of generality, we assume $\ctrobust(\varphi_1, \xi, t)\leq \ctrobust(\varphi_2, \xi, t)$.
        By Definition~\ref{def:robustness} we have 
        \begin{align*}
        \ctrobust(\varphi_1\wedge\varphi_2,\xi,t)=\min\{\ctrobust(\varphi_1,\xi,t),\ctrobust(\varphi_2,\xi,t)\} = \ctrobust(\varphi_1, \xi, t)
        \end{align*}
        1)~If $0\leq \ctrobust(\varphi_1, \xi, t)$, for all $\xi'$ in which all secondary signals satisfy
        \begin{align*}
            ||y_j-{y}'_{j}||<|\ctrobust(\varphi,\xi,t)|=|\ctrobust(\varphi_1, \xi, t)| \leq |\ctrobust(\varphi_2, \xi, t)|,
        \end{align*}
        by the induction hypothesis, we have
        \begin{align*}
            \chi(\varphi_1,\xi,t) & = \chi(\varphi_1,\xi',t)\\
            \chi(\varphi_2,\xi,t) & = \chi(\varphi_2,\xi',t)
        \end{align*}
        Therefore, it can be easily shown that
        \begin{align*}
            \chi(\varphi_1\wedge\varphi_2,\xi,t)&= \min(\chi(\varphi_1,\xi,t),\chi(\varphi_2,\xi,t))\\
            &= \min(\chi(\varphi_1,\xi',t),\chi(\varphi_2,\xi',t))\\
            &= \chi(\varphi_1\wedge\varphi_2,\xi',t)
        \end{align*}
        2)~If $\ctrobust(\varphi_1, \xi, t) < 0$, for all $\xi'$ in which all secondary signals satisfy $||y_j-{y}'_{j}||<|\ctrobust(\varphi,\xi,t)|=|\ctrobust(\varphi_1, \xi, t)|$, by the induction hypothesis, we have
        \begin{align*}
\chi(\varphi_1,\xi,t) & = \chi(\varphi_1,\xi',t) = -1
        \end{align*}
        It can be easily shown that
        \begin{align*}
        \chi(\varphi_1\wedge\varphi_2,\xi,t)= \min(\chi(\varphi_1,\xi,t),\chi(\varphi_2,\xi,t)) = -1
        \end{align*}
        and
        \begin{align*}
        \chi(\varphi_1\wedge\varphi_2,\xi',t)= \min(\chi(\varphi_1,\xi',t),\chi(\varphi_2,\xi',t))=-1
        \end{align*}
        Therefore, 
        \begin{align*}
            \chi(\varphi_1\wedge\varphi_2,\xi,t)=\chi(\varphi_1\wedge\varphi_2,\xi',t)
        \end{align*}
        \item \textbf{Until} $\varphi=\varphi_1\mathbf{U}_I\varphi_2$: By Definition~\ref{def:robustness}, we have
        \begin{align*}
            \ctrobust(\varphi_1\mathbf{U}_I\varphi_2,\xi,t)=\max_{t'\in t+I} \min(\ctrobust(\varphi_2,\xi,t'), \min_{t''\in [t,t')}\ctrobust(\varphi_1,\xi,t''))
        \end{align*}
        1)~If $0\leq\ctrobust(\varphi_1\mathbf{U}_I\varphi_2,\xi,t)$, due to the maximum operator in the robustness definition, we have $\exists t'\in t+I$ such that 
        \begin{align*}
            \min(\ctrobust(\varphi_2,\xi,t'), \min_{t''\in [t,t')}\ctrobust(\varphi_1,\xi,t'')) = \ctrobust(\varphi_1\mathbf{U}_I\varphi_2,\xi,t) 
        \end{align*}
        Due to the minimum operators, we have
        \begin{align*}
            \ctrobust(\varphi_2,\xi,t') \geq \ctrobust(\varphi_1\mathbf{U}_I\varphi_2,\xi,t)\geq 0, &\\
            \forall t''\in[t,t'), \ctrobust(\varphi_1,\xi,t'') \geq  \ctrobust(\varphi_1\mathbf{U}_I\varphi_2,\xi,t)\geq 0
        \end{align*}
        For all $\xi'$ in which all secondary signals satisfy $||y_j-{y}'_{j}||<|\ctrobust(\varphi_1\mathbf{U}_I\varphi_2,\xi,t)|$, by the induction hypothesis, we have $\exists t'\in t+I$ such that
        \begin{align*}
            \chi(\varphi_2,\xi',t')&=\chi(\varphi_2,\xi,t') = 1\\
            \forall t''\in[t,t'), \chi(\varphi_1,\xi',t'')&=\chi(\varphi_1,\xi,t'')=1
        \end{align*}
        Thus, it follows that
        \begin{align*}
            \chi(\varphi_1\mathbf{U}_I\varphi_2,\xi,t)
            = \chi(\varphi_1\mathbf{U}_I\varphi_2,\xi',t) =1
        \end{align*}
        2)~If $\ctrobust(\varphi_1\mathbf{U}_I\varphi_2,\xi,t)<0$, due to the maximum operator in the robustness definition, we have $\forall t'\in t+I$, 
        \begin{align*}
            \min(\ctrobust(\varphi_2,\xi,t'), \min_{t''\in [t,t')}\ctrobust(\varphi_1,\xi,t'')) \leq  \ctrobust(\varphi_1\mathbf{U}_I\varphi_2,\xi,t) < 0
        \end{align*}
        Due to the minimum operators, we have
        \begin{align*}
            \ctrobust(\varphi_2,\xi,t') \leq \ctrobust(\varphi_1\mathbf{U}_I\varphi_2,\xi,t)< 0, 
        \end{align*}
        or
        \begin{align*}
            \exists t''\in[t,t'), s.t.\ \ctrobust(\varphi_1,\xi,t'') \leq  \ctrobust(\varphi_1\mathbf{U}_I\varphi_2,\xi,t)< 0
        \end{align*}
        For all $\xi'$ in which all secondary signals satisfy $||y_j-{y}'_{j}||<|\ctrobust(\varphi_1\mathbf{U}_I\varphi_2,\xi,t)|$, by the induction hypothesis, we have  $\forall t'\in t+I$, 
        \begin{align*}
            \chi(\varphi_2,\xi',t') = \chi(\varphi_2,\xi,t)=-1, 
        \end{align*}
        or
        \begin{align*}
            \exists t''\in[t,t'), s.t.\ \chi(\varphi_1,\xi',t'') =  \chi(\varphi_1,\xi,t) = -1
        \end{align*}
        Thus, it can be easily shown that
        \begin{align*}
            \chi(\varphi_1\mathbf{U}_I\varphi_2,\xi,t)
            = \chi(\varphi_1\mathbf{U}_I\varphi_2,\xi',t) = -1
        \end{align*}
        \item \textbf{Cumulative-time} $\varphi=\mathbf{C}^{\tau}_I \varphi_1$: 
        By Definition~\ref{def:robustness}, we have
        \begin{align*}
            \ctrobust(\mathbf{C}^{\tau}_I \varphi_1,\xi,t)=\underset{t'\in t+I}{\textstyle\max^{\tau}}\;\ctrobust(\varphi_1,\xi,t')
        \end{align*}
        1)~If $0\leq\ctrobust(\mathbf{C}^{\tau}_I \varphi_1,\xi,t)$, it implies that the cumulative time of $|\ctrobust(\varphi_1,\xi,t')|\geq|\ctrobust(\mathbf{C}^{\tau}_I \varphi_1,\xi,t)|$ over time interval $I$ is at least $\lceil\tau\rceil$, i.e.,
        \begin{align*}
        \sum_{t'\in t + I}  [|\ctrobust(\varphi_1,\xi,t')|\geq|\ctrobust(\mathbf{C}^{\tau}_I \varphi_1,\xi,t)|] \geq \lceil\tau\rceil
        \end{align*}
        For all $\xi'$ in which all secondary signals satisfy $||y_j-{y}'_{j}||<|\ctrobust(\mathbf{C}^{\tau}_I \varphi_1,\xi,t)|$, by the induction hypothesis, we have
        \begin{align*}
         \sum_{t'\in t + I}  [\chi(\varphi_1,\xi',t') = \chi(\varphi_1,\xi,t')=1] \geq \lceil\tau\rceil
        \end{align*}       
        Thus, 
        \begin{align*}
            \chi(\mathbf{C}^{\tau}_I \varphi_1,\xi,t) = \chi(\mathbf{C}^{\tau}_I \varphi_1,\xi',t) = 1
        \end{align*}
        2)~If $\ctrobust(\mathbf{C}^{\tau}_I \varphi_1,\xi,t)<0$, it implies that the cumulative time of $|\ctrobust(\varphi_1,\xi,t')|\geq |\ctrobust(\mathbf{C}^{\tau}_I \varphi_1,\xi,t)|$ over time interval $I$ is at least $|I|-\lceil\tau\rceil+1$, i.e.,
        \begin{align*}
                \sum_{t'\in t + I}  [|\ctrobust(\varphi_1,\xi,t')|\geq |\ctrobust(\mathbf{C}^{\tau}_I \varphi_1,\xi,t)|] \geq |I|-\lceil\tau\rceil+1
        \end{align*}
        For all $\xi'$ in which all secondary signals satisfy $||y_j-{y}'_{j}||<|\ctrobust(\mathbf{C}^{\tau}_I \varphi_1,\xi,t)|$, by the induction hypothesis, we have
        \begin{align*}
            \sum_{t'\in t + I}  [ \chi(\varphi_1,\xi',t') = \chi(\varphi_1,\xi,t')=-1] \geq |I|-\lceil\tau\rceil+1
        \end{align*}
        Thus, 
        \begin{align*}
            \chi(\mathbf{C}^{\tau}_I \varphi_1,\xi,t) = \chi(\mathbf{C}^{\tau}_I \varphi_1,\xi',t) = -1
        \end{align*}
    \end{itemize}
    This completes the proof.
\end{proof}

\begin{example}
    This example illustrate Theorem~\ref{theorem:preservation} for the  $\mathbf{C}$ operator.
    Consider $\varphi = \mathbf{C}^{4}_{[2,8]} x> 1$ and discrete-time signal $x_1$, at time $t=0,\ldots,10$, shown in the table below.
    \begin{table}[h]
\centering
\resizebox{0.4\textwidth}{!}{%
\begin{tabular}{@{}llllllllllll@{}}
\toprule
$t$  & 0 & 1 & 2 & 3 & 4 & 5 & 6 & 7 & 8 & 9 & 10 \\ \midrule
$x_1$ & 0  & 0  & 2  & 3  & 4  & 7  & 10  & 0 & 5  & 5 & 15 \\ \midrule
${\chi(x>1,x_1,t)}$  & -1  & -1  & 1  & 1  & 1  & 1  &  1 & -1  & 1  & 1  & 1  \\ \midrule
$\ctrobust(x>1,x_1,t)$  & -1  & -1  & 1  & 2  & 3  & 6  & 9 & -1  & 4  & 4  & 14  \\ 
\bottomrule
\end{tabular}%
}
\end{table}

The cumulative time of $x_1$ such that $x> 1$ over $[2,8]$ is $6$ (at time $t=2,3,4,5,6,8$), so $(x_1,t) \models \varphi$.
By Definition~\ref{def:robustness}, we have 
\begin{align*}
    \ctrobust(\mathbf{C}^{4}_{[2,8]} x> 1,x_1,0) = {\textstyle\max^{4}}\;\{1,2,3,6,9,-1,4\} =4
\end{align*}
Thus, if all changes to $x_1$ are less than $4$, then the truth value of $\varphi$ is preserved, i.e., $(x_1,t) \models \varphi$ ($x>1$ holds at a minimum of 4 time points, namely, $t=4,5,6,8$).

Consider signal $x_2$, shown in the table below, in which the cumulative time 
such that $x>1$ over $[2,8]$ is $2$ (at time $t=2,5$), therefore $(x_1,t) \not\models \varphi$.
\begin{table}[h]
\centering
\resizebox{0.4\textwidth}{!}{%
\begin{tabular}{@{}llllllllllll@{}}
\toprule
$t$  & 0 & 1 & 2 & 3 & 4 & 5 & 6 & 7 & 8 & 9 & 10 \\ \midrule
$x_2$ & 0  & 0  & 2  & -3  & -4  & 7  & -5  & -1 & 0  & 5 & 15 \\ \midrule
${\chi(x>1,x_1,t)}$  & -1  & -1  & 1  & -1  & -1  & 1  &  -1 & -1  & -1  & 1  & 1  \\ \midrule
$\ctrobust(x>1,x_1,t)$  & -1  & -1  & 1  & -4  & -5  & 6  & -6 & -2  & -1  & 4  & 14  \\ 
\bottomrule
\end{tabular}
}
\end{table}

By Definition~\ref{def:robustness}, we have 
\begin{align*}
    \ctrobust(\mathbf{C}^{4}_{[2,8]} x> 1,x_2,0)={\textstyle\max^{4}}\;\{1,-4,-5,6,-6,-2,-1\} = -2
\end{align*}
Thus, if all changes to $x_2$ are less than $|-2|$, then the truth value of $\varphi$ is preserved, i.e., $(x_2,t) \not\models \varphi$. ($x>1$ does not hold at a minimum of 4 time points, namely, $t=3,4,6,7$).
\end{example}

\begin{remark}
    If signals have a uniform time-step size of an arbitrary $\delta\in \mathbb{R}$, function $\max^{\tau}$ in Definition~\ref{def:robustness} 
    should be replaced by $\max^{\tau/\delta}$.
    The intuition is that
    if $\ctrobust(\varphi, \xi, t)> 0$ holds at more than $\lceil\tau/\delta\rceil$ time steps, the cumulative time where $(\xi,t) \models \varphi$ holds is more than $\tau$.
\end{remark}

     

\section{Soundness and Completeness}\label{sec:sound_complete}

In this section, we prove the soundness and completeness of CT-STL cumulative-time robustness.

\begin{theorem}\label{theorem:soundness}
Let $\xi$ be a signal and $\varphi$ a CT-STL specification. 
The following results at time $t$ hold:
\begin{enumerate}
    \item $\ctrobust(\varphi,\xi,t) >0 \Leftrightarrow (\xi,t)\models \varphi$
    \item $\ctrobust(\varphi,\xi,t) < 0 \Leftrightarrow (\xi,t)\models \neg\varphi$
\end{enumerate}
\end{theorem}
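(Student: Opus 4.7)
The plan is to prove both statements simultaneously by structural induction on $\varphi$, since the two claims interact through the negation case. The argument mirrors the classical STL soundness proof of Donz\'e et al., with the cumulative-time operator as the only genuinely new ingredient; the other cases will reuse standard reasoning. I would also note at the outset that statements (1) and (2) together implicitly rely on the standard genericity convention at atomic propositions: when $\mu(\xi(t)) = c$ the robustness is zero and neither strict inequality applies, but the two directional claims remain well-posed.

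For the base case $\varphi = p \equiv \mu(\xi(t)) \geq c$, both claims reduce to the observation that $\ctrobust(p,\xi,t) = \mu(\xi(t)) - c$ has the same (strict) sign as the truth value of $p$. For $\varphi = \neg\varphi_1$, the inductive hypothesis on $\varphi_1$ combined with $\ctrobust(\neg\varphi_1) = -\ctrobust(\varphi_1)$ swaps statements (1) and (2). For $\varphi = \varphi_1 \wedge \varphi_2$, I would invoke $\min(a,b) > 0 \iff a>0 \wedge b>0$ and the dual $\min(a,b) < 0 \iff a<0 \vee b<0$, then apply the inductive hypothesis to each conjunct. The Until case follows the template already used in the proof of Theorem~\ref{theorem:preservation}: the $\max_{t'}\min$ structure of the robustness mirrors the $\exists t' \forall t''$ structure of satisfaction, so positivity of the max-of-mins is equivalent to the existence of a witness $t'$ on which the inner $\min$ is positive, which under the inductive hypothesis coincides with the $t'$ witnessing the Until.

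The genuinely new case is $\varphi = \mathbf{C}^\tau_I \varphi_1$, and the argument turns on unpacking $\max^\tau$. For direction (1), $\ctrobust(\mathbf{C}^\tau_I \varphi_1,\xi,t) > 0$ means the $\lceil\tau\rceil$-th largest element of the finite list $\{\ctrobust(\varphi_1,\xi,t') : t' \in t+I\}$ is strictly positive, hence at least $\lceil\tau\rceil$ elements of that list are strictly positive; applying the inductive hypothesis pointwise, this is equivalent to $\sum_{t' \in t+I}[(\xi,t') \models \varphi_1] \geq \lceil\tau\rceil$, and since this sum is integer-valued it is equivalent to the sum being $\geq \tau$, which is exactly $(\xi,t) \models \mathbf{C}^\tau_I \varphi_1$. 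Direction (2) is symmetric: a negative $\lceil\tau\rceil$-th largest value forces strictly fewer than $\lceil\tau\rceil$ elements to be nonnegative, so by the inductive hypothesis strictly fewer than $\lceil\tau\rceil \leq \tau$ time points in $t+I$ satisfy $\varphi_1$, hence $(\xi,t) \models \neg\mathbf{C}^\tau_I \varphi_1$.

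The main obstacle is precisely this bridge between the real-valued threshold $\tau$ and the integer cardinality of satisfying time steps; the ceiling in the definition of $\max^\tau$ is what makes the conversion clean, and the integer-valuedness of $\sum[\cdot]$ is the technical fact that closes the equivalence. A secondary subtlety is keeping the induction hypothesis in the right form so that ``positive robustness at $t'$'' can be converted to ``$\varphi_1$ satisfied at $t'$'' one time point at a time; this is routine but must be stated carefully to avoid circularity with the zero-robustness boundary case already flagged above.
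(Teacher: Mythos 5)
Your proof is correct and follows essentially the same structural induction as the paper's, with the same case analysis and the same unpacking of $\max^{\tau}$ into a cardinality bound for the cumulative-time operator. Your decision to run statements (1) and (2) as a simultaneous induction is in fact the right formalization of what the paper does implicitly (its negation case for (1) invokes (2) on the subformula), and your explicit treatment of the $\lceil\tau\rceil$-versus-$\tau$ integrality step tightens a point the paper glosses over.
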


\begin{proof}
We prove statement~(1) of Theorem~\ref{theorem:soundness}. Statement~(2) can be proven analogously.
The proofs follow the inductive structure used in the cumulative-time robustness definition.


($\Rightarrow$)

\begin{itemize}
    \item $\varphi = p$. By definition, $\ctrobust(p,\xi,t) = \mu(\xi(t)) - c > 0$, which implies $(\xi,t) \models p$.

    \item $\varphi = \neg\varphi_1$. By definition, $\ctrobust(\varphi,\xi,t) = -\ctrobust(\varphi_1,\xi,t) > 0$. 
    
    By the induction hypothesis, $\ctrobust(\varphi_1,\xi,t) < 0$ implies $(\xi,t) \models \neg\varphi_1$, so $(\xi,t) \models \varphi$.

    \item $\varphi = \varphi_1 \wedge \varphi_2$. By definition, $\ctrobust(\varphi,\xi,t) = \min(\ctrobust(\varphi_1,\xi,t), \ctrobust(\varphi_2,\xi,t)) > 0$, so both $$\ctrobust(\varphi_1,\xi,t) > 0$$ and $$\ctrobust(\varphi_2,\xi,t) > 0$$
    
    Hence, $(\xi,t) \models \varphi_1$ and $(\xi,t) \models \varphi_2$, which implies $(\xi,t) \models \varphi$.

    \item $\varphi = \varphi_1 \mathbf{U}_I \varphi_2$. By definition, $\ctrobust(\varphi,\xi,t) = \max_{t' \in t+I} \min\big(\ctrobust(\varphi_2,\xi,t'), \min_{t'' \in [t,t')} \ctrobust(\varphi_1,\xi,t'')\big) > 0$. Therefore, there exists $t' \in t+I$ such that $\ctrobust(\varphi_2,\xi,t') > 0$ and $\ctrobust(\varphi_1,\xi,t'') > 0$ for all $t'' \in [t,t')$. By the induction hypothesis, $(\xi,t'') \models \varphi_1$ for all such $t''$ and $(\xi,t') \models \varphi_2$, so $(\xi,t) \models \varphi$.
    \item $\varphi = \mathbf{C}_I^{\tau} \varphi_1$. By definition, $\ctrobust(\varphi,\xi,t) = \max^{\tau}_{t' \in t+I} \ctrobust(\varphi_1,\xi,t') > 0$, which implies $$\sum_{t' \in t+I} [\ctrobust(\varphi_1,\xi,t') > 0] \geq \tau$$ From the induction hypothesis, this gives $\sum_{t' \in t+I} [(\xi,t') \models \varphi_1] \geq \tau$, hence $(\xi,t) \models \varphi$.
\end{itemize}

($\Leftarrow$)
\begin{itemize}
    \item $\varphi = p$. By definition, $(\xi,t)\models p$ implies $\mu(\xi(t)) > c$, which then implies $\ctrobust(p,\xi,t) =\mu(\xi(t)) - c > 0$.
    \item $\varphi =\neg\varphi_1$. 
    By definition, $(\xi,t)\models\neg\varphi_1$ implies that $\neg((\xi,t)\models\varphi_1)$.
    From the induction hypothesis, we have $\ctrobust(\varphi_1,\xi,t) < 0$, which implies $\ctrobust(\varphi,\xi,t)= -\ctrobust(\varphi_1,\xi,t) > 0$.
    \item $\varphi =\varphi_1 \wedge \varphi_2$.
    This is equivalent to $(\xi,t)\models \varphi_1 \wedge (\xi,t)\models \varphi_2$ holds.
    From induction hypothesis, we have $\ctrobust(\varphi_1,\xi,t) > 0$; and $\ctrobust(\varphi_2,\xi,t) > 0$.
    By Definition~\ref{def:robustness}, we conclude that $\ctrobust(\varphi,\xi,t) = \min(\ctrobust(\varphi_1,\xi,t),\ctrobust(\varphi_2,\xi,t)) > 0$.
    \item $\varphi = \varphi_1 \mathbf{U}_I \varphi_2$. By definition, $(\xi,t) \models \varphi_1 \mathbf{U}_I \varphi_2$ implies that there exists $t' \in t + I$ such that $(\xi,t') \models \varphi_2$ and $(\xi,t'') \models \varphi_1$ for all $t'' \in [t,t')$. By the induction hypothesis, this gives $\ctrobust(\varphi_2,\xi,t') > 0$ and $\ctrobust(\varphi_1,\xi,t'') > 0$ for all $t'' \in [t,t')$. 
    
    Therefore, $$\min\big(\ctrobust(\varphi_2,\xi,t'),\, \min_{t'' \in [t,t')} \ctrobust(\varphi_1,\xi,t'')\big) > 0$$
    and hence $\ctrobust(\varphi,\xi,t) = \max_{t' \in t+I} \min\big(\ctrobust(\varphi_2,\xi,t'),\, \min_{t'' \in [t,t')} \ctrobust(\varphi_1,\xi,t'')\big) > 0$.
    \item $\varphi = \mathbf{C}^{\tau}_I \varphi_1$. 
    This implies that $\sum_{t'\in t + I}  [(\xi,t')\models \varphi_1] \geq \tau$.
    From the induction hypothesis, $\sum_{t'\in t + I}  [\ctrobust(\varphi,\xi,t') >0] \geq \tau$.
    Thus, $\ctrobust(\varphi,\xi,t)=\underset{t'\in t+I}{\textstyle\max^{\tau}}\;\ctrobust(\varphi,\xi,t') >0$.
\end{itemize}

This completes the proof.
\end{proof}
\renewcommand{\algorithmicrequire}{\textbf{Input:}}
\renewcommand{\algorithmicensure}{\textbf{Output:}}
\newcommand{\SWITCH}[1]{\STATE \textbf{switch} (#1)}
\newcommand{\ENDSWITCH}{\STATE \textbf{end switch}}
\newcommand{\CASE}[1]{\STATE \textbf{case} #1\textbf{:} \begin{ALC@g}}
\newcommand{\ENDCASE}{\end{ALC@g}}
\newcommand{\CASELINE}[1]{\STATE \textbf{case} #1\textbf{:} }
\newcommand{\DEFAULT}{\STATE \textbf{default:} \begin{ALC@g}}
\newcommand{\ENDDEFAULT}{\end{ALC@g}}
\newcommand{\DEFAULTLINE}[1]{\STATE \textbf{default:} }

\section{Monitoring CT-STL}\label{sec:monitoring}

In this section, we review Deshmukh et al.'s \cite{2017monitor} online monitoring method for standard STL and extend it to monitor CT-STL.  As in~\cite{2017monitor},
we assume we monitor a signal over a finite time domain.



\begin{definition}[Prefix and completion~\cite{2017monitor}]\label{def:completions}
    Let $\mathbb{T'}=\{0,\ldots,i\} \subseteq \mathbb{T}$ and $\xi_{[0,i]}$ be a partial signal over $\mathbb{T'}$.
    Then $\xi_{[0,i]}$ is a {prefix} of signal $\xi$ if for all $0 \leq t \leq i$, $\xi(t)=\xi_{[0,i]}(t)$; a signal $\xi$ is a completion of $\xi_{[0,i]}$ if $\xi_{[0,i]}$ is a prefix of $\xi$.  
    Let $S(\xi_{[0,i]})$ denote the set of completions of $\xi$.
\end{definition}

\begin{definition}[Robust satisfaction interval (RoSI)]\label{def:rosi}
    We extend the RoSI defined in~\cite{2017monitor} and define the RoSI of an CT-STL formula $\varphi$ on a partial signal $\xi_{[0,i]}$ at a time $t$ as an interval $I$ such that:
    \begin{align}
        \text{inf}(I) &= \underset{\xi \in S(\xi_{[0,i]})}{\text{inf}} \ctrobust (\varphi, \xi, t)\\
        \text{sup}(I) &= \underset{\xi \in S(\xi_{[0,i]})}{\text{sup}} \ctrobust (\varphi, \xi, t)
    \end{align}
\end{definition}

Next, we define a function $[\theta]$ and then show that it computes the RoSI.  This is similar to \cite{2017monitor}, except we add equation (\ref{eq-rosi-C}) for the $\mathbf{C}$ operator, and we omit the cases for Eventually and Always, because we treat them as derived operators.
Let interval $I_j=(a_j, b_j)$, $a_j$, $b_j \in \mathbb{R}$.
we define $\underset{j}{\max}^{\tau}(I_j) = (\underset{j}{\max}^{\tau}(a_j), \underset{j}{\max}^{\tau}(b_j))$, where the operations on the right-hand side are defined in Definition~\ref{def:robustness}.

\begin{definition}
    We define a recursive function $[\ctrobust]$ that maps a given CT-STL formula $\varphi$, a partial signal $\xi_{[0, i]}$ and a time $t$ to an interval $[\ctrobust](\varphi, \xi_{[0,i]}, t)$. The notation $\mu_{\text{inf}}$ and $\mu_{\text{sup}}$  respectively denote the infimal and supremal value of the function $\mu(\xi)$ over the signal domain $\mathbb{R}^n$, respectively.

    \begin{align}
        [\ctrobust](\mu(\xi_{[0,i]}) \geq 0, \xi_{[0,i]}, t) &= \left\{
        \begin{aligned}
            &[\mu(\xi_{[0,i]}(t)),\mu(\xi_{[0,i]}(t))], 
            &&t \in \mathbb{T}'\\
            &[\mu_{\text{inf}}, \mu_{\text{sup}}], 
            && \text{otherwise}
        \end{aligned}
        \right.\\
        [\ctrobust](\neg \varphi, \xi_{[0,i]}, t) &= -[\ctrobust](\varphi, \xi_{[0,i]}, t)\\
        [\ctrobust](\varphi_1 \wedge \varphi_2, \xi_{[0,i]}, t) &=\min ([\ctrobust](\varphi_1, \xi_{[0,i]}, t), [\ctrobust](\varphi_2, \xi_{[0,i]}, t)) \label{eq-rosi-and}\\
        [\ctrobust](\varphi_1 \mathbf{U}_I \varphi_2, \xi_{[0,i]}, t) &= \sup_{t_1\in t+I} \min (
            [\ctrobust](\varphi_2, \xi_{[0,i]}, t_1),
            \inf_{t_2\in[t, t_1]}(\varphi_1, \xi_{[0,i]}, t_2) ) \label{eq-rosi-u}\\
        [\ctrobust](\mathbf{C}_I^{\tau} \varphi, \xi_{[0,i]}, t) &= \underset{t'\in t+I}{\text{max}^{\tau}} ([\ctrobust](\varphi, \xi_{[0,i]}, t')) \label{eq-rosi-C}
    \end{align}
\end{definition}

Equations (\ref{eq-rosi-and})--(\ref{eq-rosi-u}) use interval arithmetic operations introduced in ~\cite{2017monitor}. 
Now we show that $[\theta]$ calculates the RoSI of CT-STL.
\begin{lemma}
     For any CT-STL formula $\varphi$, the function $[\ctrobust](\varphi, \xi_{[0,i]}, t)$ defines the RoSI of $\varphi$ over the partial signal $\xi_{[0,i]}$ at time $t$. 
\end{lemma}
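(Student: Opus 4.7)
The plan is to proceed by structural induction on the CT-STL formula $\varphi$, following the scheme of Deshmukh et al.~\cite{2017monitor}. The base case ($p$) and the inductive cases for $\neg$, $\wedge$, and $\mathbf{U}_I$ are essentially those already established in~\cite{2017monitor}: for atomic propositions the partial signal either pins down $\mu(\xi_{[0,i]}(t))$ exactly (giving a singleton RoSI) or is unconstrained at $t$ (giving $[\mu_{\text{inf}},\mu_{\text{sup}}]$, which is the tightest interval containing $\mu(\xi(t))$ over $\xi\in S(\xi_{[0,i]})$); and the interval operations $-$, $\min$, $\sup$, $\inf$ preserve the RoSI property because they are monotone in each argument, so bounds propagate exactly. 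I would invoke these cases without reproving them and focus attention on the single new case.

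The only genuinely new obligation is equation~(\ref{eq-rosi-C}) for $\varphi=\mathbf{C}_I^{\tau}\varphi_1$. Here I would argue in two directions. First, by the induction hypothesis, for every $t'\in t+I$ we have
\begin{equation*}
\inf([\ctrobust](\varphi_1,\xi_{[0,i]},t'))=\inf_{\xi\in S(\xi_{[0,i]})}\ctrobust(\varphi_1,\xi,t'),\quad \sup([\ctrobust](\varphi_1,\xi_{[0,i]},t'))=\sup_{\xi\in S(\xi_{[0,i]})}\ctrobust(\varphi_1,\xi,t').
\end{equation*}
Second, the scalar function $\max^{\tau}$ (the $\lceil\tau\rceil$-th largest value) is monotone non-decreasing in each of its arguments. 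Combining monotonicity with the definition of $\ctrobust(\mathbf{C}_I^{\tau}\varphi_1,\xi,t)=\max^{\tau}_{t'\in t+I}\ctrobust(\varphi_1,\xi,t')$ gives one direction: $\sup_{\xi}\ctrobust(\mathbf{C}_I^{\tau}\varphi_1,\xi,t)\le \max^{\tau}_{t'}\sup_{\xi}\ctrobust(\varphi_1,\xi,t')$, and symmetrically for the infimum. This matches the component-wise definition of $\max^{\tau}$ on intervals introduced just before the lemma.

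For the reverse direction I would construct, for any $\varepsilon>0$, a completion $\xi^{*}\in S(\xi_{[0,i]})$ whose $\ctrobust(\varphi_1,\xi^{*},t')$ is within $\varepsilon$ of $\sup_{\xi}\ctrobust(\varphi_1,\xi,t')$ simultaneously at every $t'\in t+I$, and similarly an $\varepsilon$-optimal minimiser for the infimum. The justification is that for $t'>i$ the completion is unconstrained, and Deshmukh et al.'s interval-arithmetic semantics already uses precisely this independence-across-unknown-points structure to show exactness of the RoSI for the standard operators; inheriting such a witness from the IH for $\varphi_1$ at each $t'$ and then combining them (using that the RoSI bounds are attained by completions of $\xi_{[0,i]}$) gives the required lower bound on $\sup_{\xi}\max^{\tau}_{t'}\ctrobust(\varphi_1,\xi,t')$, and analogously the required upper bound on $\inf_{\xi}\max^{\tau}_{t'}\ctrobust(\varphi_1,\xi,t')$. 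Taking $\varepsilon\to 0$ closes the induction.

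The main obstacle I foresee is exactly this simultaneous-witness step: the robustness values $\ctrobust(\varphi_1,\xi,t')$ at different $t'$ can in principle be correlated through shared future time points that $\varphi_1$ inspects, so one cannot naively take separate optimal completions for each $t'$ and glue them together. I would handle this by leaning on the same mechanism that makes the $\mathbf{U}_I$ case in~\cite{2017monitor} work, and, if necessary, by restricting attention to the prefix-relevant time points and then extending arbitrarily beyond $\max(t+I)$ plus the lookahead horizon of $\varphi_1$; outside the prefix the completions remain fully flexible, so the component-wise suprema and infima remain jointly achievable in the limit. Everything else in the proof is a routine application of interval monotonicity.
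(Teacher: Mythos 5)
Your containment direction is correct and is, in fact, the entirety of what the paper itself proves for this lemma: the paper delegates the STL cases to~\cite{2017monitor} and, for the $\mathbf{C}$ case, argues exactly that the $\lceil\tau\rceil$-th largest value of a list is monotone in each entry, so for every completion $\xi$ the value $\ctrobust(\mathbf{C}^{\tau}_I\varphi_1,\xi,t)$ is sandwiched between the $\lceil\tau\rceil$-th largest lower bound and the $\lceil\tau\rceil$-th largest upper bound of the children's intervals. Up to that point you match the paper's argument.

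The gap is in your second (tightness) direction, and it is not a fixable gap: the exactness you are trying to establish is false in general, so the simultaneous $\varepsilon$-optimal witness you want need not exist, and your proposed repair (extending arbitrarily beyond the lookahead horizon of $\varphi_1$) does not touch the real source of correlation, namely that the horizons of $\varphi_1$ anchored at different $t'\in t+I$ overlap \emph{inside} the window. Concretely, take $\varphi_1=(x>0)\wedge\mathbf{F}_{[1,1]}(-x>0)$, so that $\ctrobust(\varphi_1,\xi,t')=\min\bigl(x(t'),-x(t'+1)\bigr)$, let the window contain two consecutive instants $t_1,t_1+1$ lying beyond the known prefix, and let $\tau=2$ (so $\max^{\tau}$ is the minimum of the two values). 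Each component supremum over completions is $\mu_{\text{sup}}$ (drive $x(t_1+1)$ very negative for the first instant, very positive for the second), but any single completion must commit to one sign of $x(t_1+1)$, which forces $\min$ of the two robustness values to be at most $0$. Hence $\sup_{\xi}\max^{\tau}_{t'}\ctrobust(\varphi_1,\xi,t')$ is strictly below $\max^{\tau}_{t'}\sup_{\xi}\ctrobust(\varphi_1,\xi,t')$, so equation~(\ref{eq-rosi-C}) is not tight with respect to Definition~\ref{def:rosi}. (The same phenomenon already defeats exactness for conjunction in the interval semantics of~\cite{2017monitor}, e.g.\ for $\varphi\wedge\neg\varphi$; their result, like this paper's proof, only asserts containment.) The right move is to drop the tightness half entirely and prove only the enclosure property---which is all that is needed for sound early termination, and all the paper actually establishes---while noting that the lemma should be read as saying the computed interval \emph{contains} the RoSI rather than equals it.
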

   
\begin{proof}
The proof is by induction over the structure of the formula.  The correctness of 
the equations for the STL operators follows from~\cite{2017monitor}. Here we prove the correctness of equation (\ref{eq-rosi-C}). 
Let $I=[a,b]$ and 
$[\ctrobust](\varphi, \xi_{[0,i]}, t+i) = [p_i, q_i], i\in I$.
Assume  $[\ctrobust](\varphi, \xi_{[0,i]}, t)$ is a RoSI, i.e., for any completion $\xi$ of $\xi_{[0,i]}$, $\ctrobust(\varphi, \xi, t)\in [\ctrobust](\varphi, \xi_{[0,i]}, t)$. 
By definition, $\ctrobust(\mathbf{C}_I^{\tau} \varphi, \xi_{[0,i]}, t) = \underset{t'\in t+I}{\textstyle\max^{\tau}}\;\ctrobust(\varphi,\xi,t')$ is the $\tau$-th largest value of $\ctrobust(\varphi,\xi,t'), t'\in t+I$.
It is obvious that $\tau$-th largest value is less than the $\tau$-th largest upper bound; similarly, it is greater than $\tau$-th largest lower bound, i.e., $$\underset{t'\in t+I}{\textstyle\max^{\tau}}\; p_i \leq \underset{t'\in t+I}{\textstyle\max^{\tau}}\;\ctrobust(\varphi, \xi_{[0,i]}, t) \leq \underset{t'\in t+I}{\textstyle\max^{\tau}} q_i$$
This concludes the proof.
\end{proof}

\begin{algorithm}[tb]
    \caption{updateWorkList($v_{\psi}, i+1, \xi(i+1)$)}
    \label{alg:updateworklist}
 \begin{algorithmic}[1]
    \SWITCH {$\psi$} \textbf{do} 
        \CASE{$\mathbf{C}_I^{\tau} \varphi$}
            \STATE updateWorkList($v_{\varphi}, i+1, \xi(i+1)$)
            \STATE worklist[$v_{\psi}$] = SlidingC($\mathbf{C}_I^{\tau} \xi>0$, worklist[$v_{\varphi}$], $t$) 
        \ENDCASE
        \STATE \textbf{other cases:} \\
        \STATE  \quad  Run Algorithm 2: updateWorkList($v_{\varphi}, i+1, \xi(i+1)$) in~\cite{2017monitor}.
    \ENDSWITCH
\end{algorithmic}
\end{algorithm}

We follow the online monitoring approach proposed in \cite{2017monitor}. Although they focus on non-uniformly sampled signals, i.e., time steps are not uniform, it is easy to convert to uniformly sampled signals with unit time step.  We briefly summarize their approach; see \cite{2017monitor} for additional explanation. The algorithm works bottom-up on the syntax tree of the monitored formula.  It computes the RoSI for the formula associated with each node $v$ by combining the RoSIs for its sub-formulas, i.e., its children.  The computation of the RoSI at a node $v$ has to be delayed until the RoSIs at its children in a certain time interval are available. This time interval is called the time horizon of $v$ and denoted hor($v$).  If $v$ is the root node, hor($v$)=[0]. Otherwise, if the parent of $v$ is a temporal operator $\mathbf{H}_I$ $(\mathbf{H}=\mathbf{F},\mathbf{G},\mathbf{U}$), then hor($v$)=$I$+hor(parent($v$)). If the parent of $v$ is a logic operator, then hor($v$)=hor(parent($v$)). 

For every node, there is a worklist that stores the RoSI of the associated formula at multiple time instants.  Let $\psi$ denote the subformula corresponding to node $v$. Specifically, after processing a partial signal
$\xi_{[0,i]}$, the algorithm maintains in worklist[$v$]($t$) the RoSI $[\theta](\psi, \xi_{[0,i]}, t)$ for each $t \in {\rm hor}(v) \cap [0, i]$.
Algorithm~\ref{alg:updateworklist} updates the RoSI at time $t$ in the worklist of $v$ from $[\theta](\psi, \xi_{[0,i]}, t)$ to $[\theta](\psi, \xi_{[0,i+1]}, t)$ in a bottom-up fashion when a new signal point at $t=i+1$ is available. We propose SlidingC algorithm in the next paragraph to deal with the case when a node has a C operator. For lack of space, other cases with different operators can be found in ~\cite{2017monitor} and easily adapt to signal in uniform time domain as we defined. Note that we can use the RoSI of the root node to determine whether the formula is satisfied with the current partial signal. 

To handle the $\mathbf{C}_I^\tau$ operator, we introduce an algorithm SlidingC, presented in Algorithm~\ref{alg:SlidingC}.  It returns the $\tau$-th largest values in sliding windows, along with the times at which those values occur. 
Let $w = b-a+1$ denote the size of the sliding window and the signal length is $n$.
It maintains two heaps, the min heap and max heap. Min heap stores the top $\tau$ elements in the current window, ensuring the root element is the $\tau$-th largest. Max heap stores the remaining elements in the current window. Every time  the sliding window moves forward, the new signal value is pushed to the max heap if it's smaller than the root of min heap, otherwise it is pushed to min heap.
The old signal value is deleted lazily, for efficiency.  The outdated signal value is marked as outdated in a hash table and only actually deleted from the heap when it becomes the heap root. We move items between the two heaps as needed to ensure the min heap contains exactly $\tau$ elements. The $\tau$-th largest element at each step is then retrieved as the root of min heap, ensuring an optimal retrieval time of O(1). Finally we return $y = \{(i-a, \ctrobust(\mathbf{C}_{[a,b]}^{\tau} \xi>0, \xi_{[0,n]}, i-a))\}, i = 0,\ldots, a-b+n$. 
Since the insertion, deletion, and rebalance operations all have time complexity of $O(\log(\tau)+\log(w-\tau))$, the overall time complexity is $O(n (\log(\tau)+\log(w-\tau)))$.

\begin{algorithm}[tb]
    \caption{SlidingC $(\mathbf{C}_{[a,b]}^{\tau} \xi>0, \xi_{[0,n]}, t)$ }
    \label{alg:SlidingC}
    \begin{algorithmic}[1]
        \STATE \textbf{Output}: Sliding $\tau$-th largest values over $[0, n]$ with time $t$

        \STATE minHeap $\leftarrow \emptyset$ \hfill // min heap stores the largest $\tau$ elements within current window
        \STATE maxHeap $\leftarrow \emptyset$ \hfill // max heap stores remaining elements within current window
        \STATE result $\leftarrow \emptyset$ \hfill     // stores the $\tau$-th largest element per window

        \FOR{$i = 0$ \TO $n$}

            \STATE Insert $\xi(i) $ into the appropriate heap
            \IF{$i\geq b-a$}             
                \STATE Remove the outdated element from the appropriate heap
                \STATE Rebalance heaps to ensure minHeap contains exactly $\tau$ elements
                \STATE Append ($i-b$, minHeap.top()) to result \hfill // save the $\tau$-th largest value 
            \ENDIF
        \ENDFOR   
        
        \STATE return result
    \end{algorithmic}
\end{algorithm}

In Figure~\ref{a run of online monitoring}, we give an example of a run of the online monitoring Algorithm~\ref{alg:updateworklist} for CT-STL formula: $\varphi = \mathbf{G}_{[0,2]}\mathbf{C}_{[1,5]}^3 \xi>0$. We assume that at first the signal has five known points on $t=0,\ldots,4$, and we show the effect of receiving new signal values at $t=5, 6$ and $7$. The figure shows the states of the RoSI worklists at each node of the syntax tree at these times when monitoring the CT-STL formula presented in the figure. 
Each row in the table adjacent to a node shows the state of the worklists after we receive a new signal point at the time indicated in the first column.   
We can find that the RoSI shrinks or stays the same when new signal points are available. 
When we have the signal point at $t=6$, we can determine the RoSI at the root is $[7,7]$, i.e., robustness is $7$.
This shows $\varphi$ is satisfied without knowing the last signal point at $t=7$ in expected horizon.

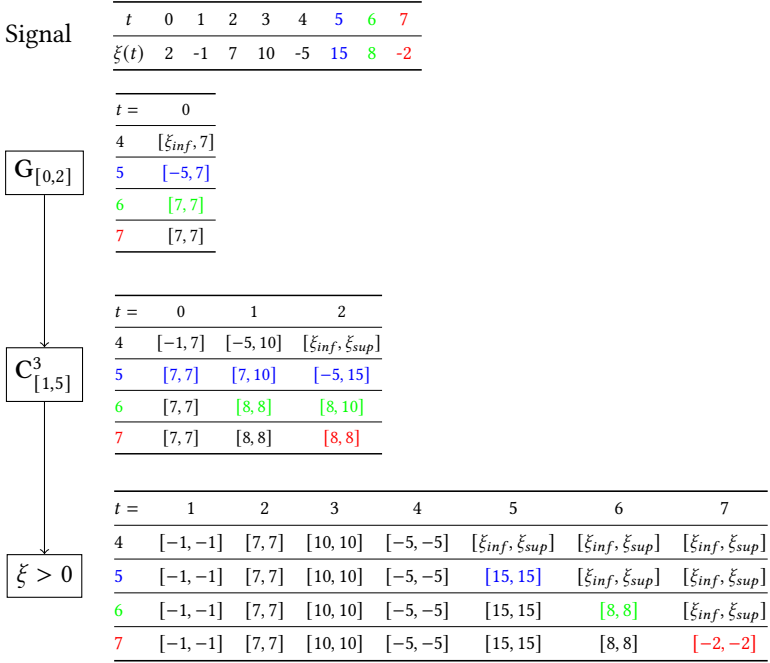
\begin{figure}[t]
\begin{tikzpicture}

    \node (n4) at (0,1.8)   {
    \resizebox{0.3\textwidth}{!}{%
    \begin{tabular}{@{}cccccccccccc@{}}
    \toprule
    $t$  & 0 & 1 & 2 & 3 & 4  & {\color{blue} 5} & {\color{green} 6} & {\color{red} 7} \\ \midrule
    $\xi(t)$ & 2  & -1  & 7  & 10  & -5  & {\color{blue} 15}  & {\color{green} 8} & {\color{red} -2} \\ 
    \bottomrule
    \end{tabular}
    } 
  };
  \node (n1) [draw, rectangle] at (-3,0)    {$\mathbf{G}_{[0,2]}$};
  \node (n2) [draw, rectangle, below=2cm of n1]    {$\mathbf{C}_{[1,5]}^3$};
  \node (n3) [draw, rectangle, below=2cm of n2]   {$\xi>0$};
\node (n8) [left=0.3cm of n4] {Signal};
  
    \node (n5) [right=0.3cm of n1]   {
    \resizebox{0.1\textwidth}{!}{%
    \begin{tabular}{@{}lc@{}}
    \toprule
    $t=$  & $0$ \\ \midrule
    $4$ &  $[\xi_{inf}, 7]$  \\ \midrule
    {\color{blue} $5$} & {\color{blue}$[-5,7]$}   \\ \midrule
    {\color{green} $6$} & {\color{green}$[7,7]$}    \\ \midrule
    {\color{red} $7$} & $[7,7]$ \\
    \bottomrule
    \end{tabular}
    } 
    };

    \node (n6) [right=0.3cm of n2]   {
    \resizebox{0.26\textwidth}{!}{%
    \begin{tabular}{@{}lccc@{}}
    \toprule
    $t= $ &  $0$ & $1$ & $2$\\ \midrule
    $4$ &  $[-1,7]$ & $[-5,10]$ & $[\xi_{inf}, \xi_{sup}]$  \\ \midrule
    {\color{blue} $5$} &  {\color{blue} $[7,7]$} & {\color{blue} $[7,10]$} & {\color{blue} $[-5, 15]$}  \\ \midrule
    {\color{green} $6$} &  $[7,7]$ & {\color{green}$[8,8]$} & {\color{green}$[8,10]$}  \\ \midrule
    {\color{red} $7$} & $[7,7]$ & $[8,8]$ & {\color{red} $[8,8]$}\\
    \bottomrule
    \end{tabular}
    }
    };

    \node (n7) [right=0.3cm of n3]   {
    \resizebox{0.63\textwidth}{!}{%
    \begin{tabular}{@{}lccccccc@{}}
    \toprule
    $t=$   &  $1$ & $2$ & $3$ & $4$ & $5$ & $6$ & $7$ \\ \midrule
    $4$  & $[-1,-1]$ & $[7,7]$ & $[10,10]$ & $[-5,-5]$ & $[\xi_{inf}, \xi_{sup}]$  & $[\xi_{inf}, \xi_{sup}]$ & $[\xi_{inf}, \xi_{sup}]$    \\ \midrule
    {\color{blue} $5$}  & $[-1,-1]$ & $[7,7]$ & $[10,10]$  &  $[-5,-5]$  & {\color{blue}$[15,15]$}  & $[\xi_{inf}, \xi_{sup}]$  & $[\xi_{inf}, \xi_{sup}]$  \\ \midrule
    {\color{green} $6$} & $[-1,-1]$ & $[7,7]$ & $[10,10]$ & $[-5,-5]$ & $[15,15]$ & {\color{green}$[8,8]$} & $[\xi_{inf}, \xi_{sup}]$  \\ \midrule
    {\color{red} $7$}  & $[-1,-1]$ & $[7,7]$ & $[10,10]$ & $[-5,-5]$ & $[15,15]$ & $[8,8]$ & {\color{red} $[-2,-2]$}\\
    \bottomrule
    \end{tabular}
    }
    };

  \draw[->] (n1.south)  -- (n2.north);
  \draw[->] (n2.south) -- (n3.north);
  
\end{tikzpicture}
    \caption{A snapshot of worklist[$v$] maintained for four (incremental) partial traces of the signal $\xi (t)$. $(\xi_{inf},\xi_{sup})$ correspond to the greatest lower bound and lowest upper bound on $\xi$. Each row is the worklist[$v$] when the signal point at the time in the first column is available. The colored entry is affected by the availability of a signal point with corresponding color.}
    \label{a run of online monitoring}
\end{figure}

\section{Experimental Results}\label{sec:experiment}

In this section, we demonstrate the expressivity of CT-STL and our online monitoring algorithm via two case studies.
Experiments were performed on an Intel Core i7-14700H CPU with 32GB of DDR5 RAM and a Windows 11 operating system. Our case studies have been implemented in Python; 
our implementation and case studies will be publicly available.

\subsection{Microgrid}
\label{case study microgrid}

\begin{figure}[t]
    \centering
    \includegraphics[width=\linewidth]{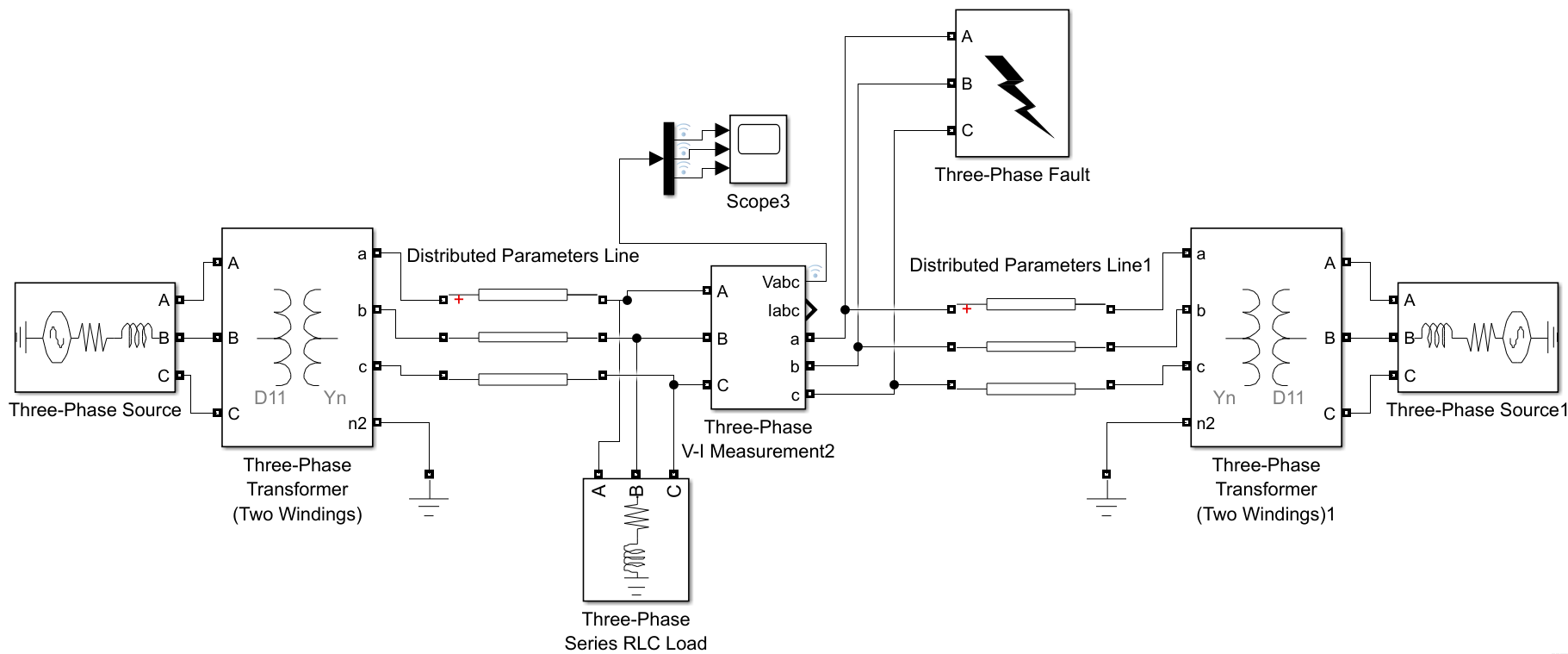}
    \caption{Microgrid model implemented using MATLAB Simulink.}
    \label{fig:microgrid simulink}
\end{figure}

\begin{figure}[t]
    \centering
    \subfloat[]{%
    \includegraphics[width=0.24\textwidth]{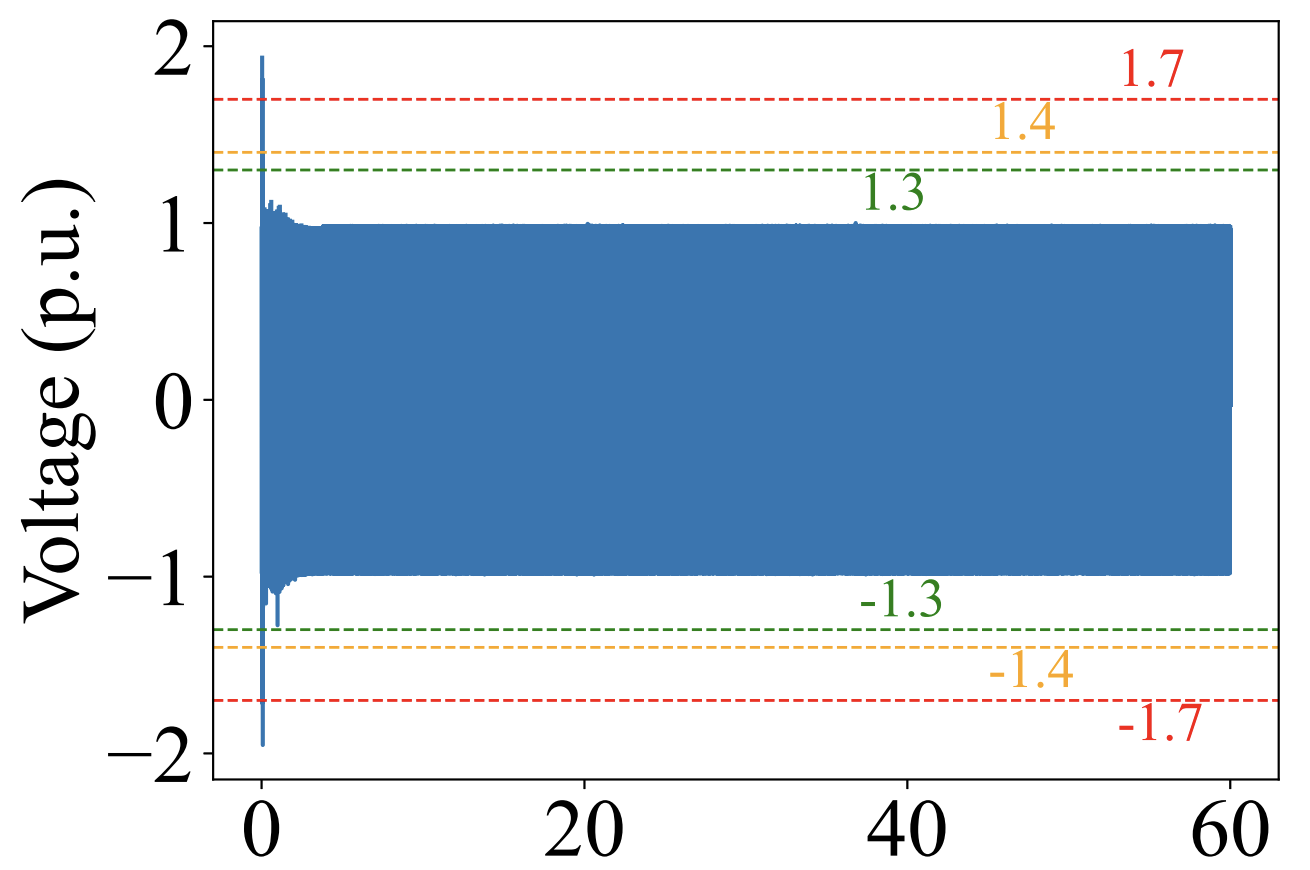}%
        \label{fig:v1}%
    }
    \hfill
        \subfloat[]{%
    \includegraphics[width=0.24\textwidth]{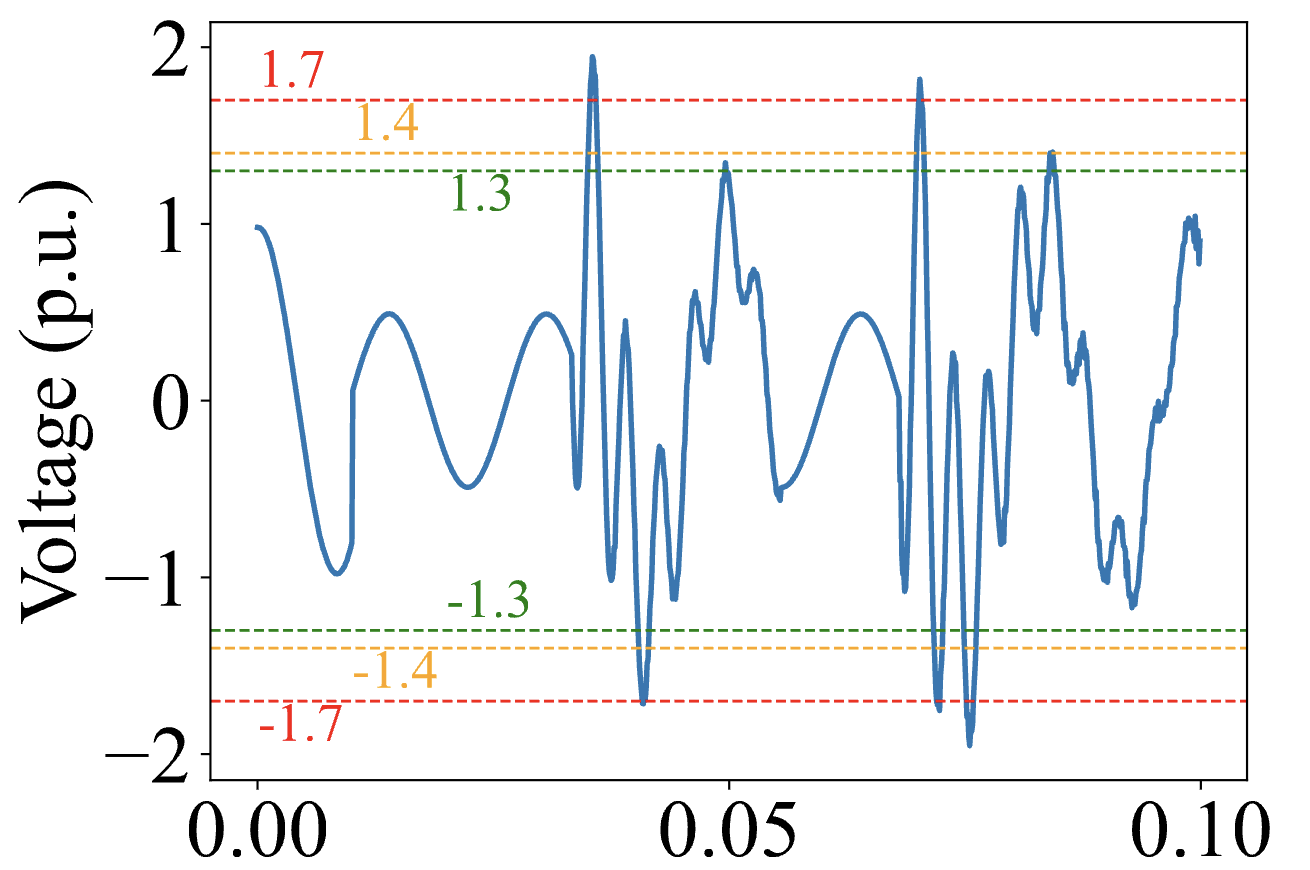}
        \label{fig:v2}
    }
    \hfill
    \subfloat[]{%
  \includegraphics[width=0.24\textwidth]{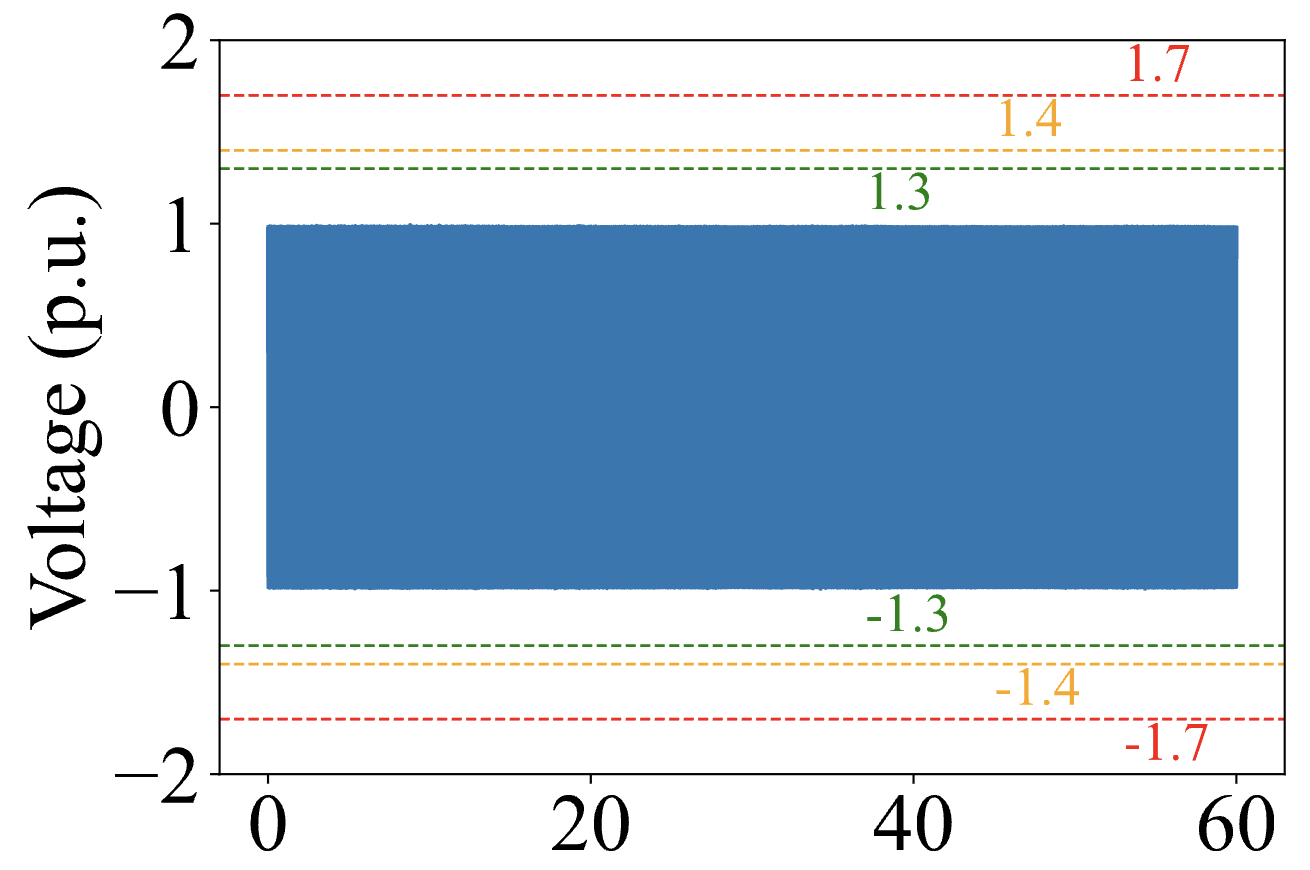}%
        \label{fig:v3}
    }
    \hfill
    \subfloat[]{%
 \includegraphics[width=0.24\textwidth]{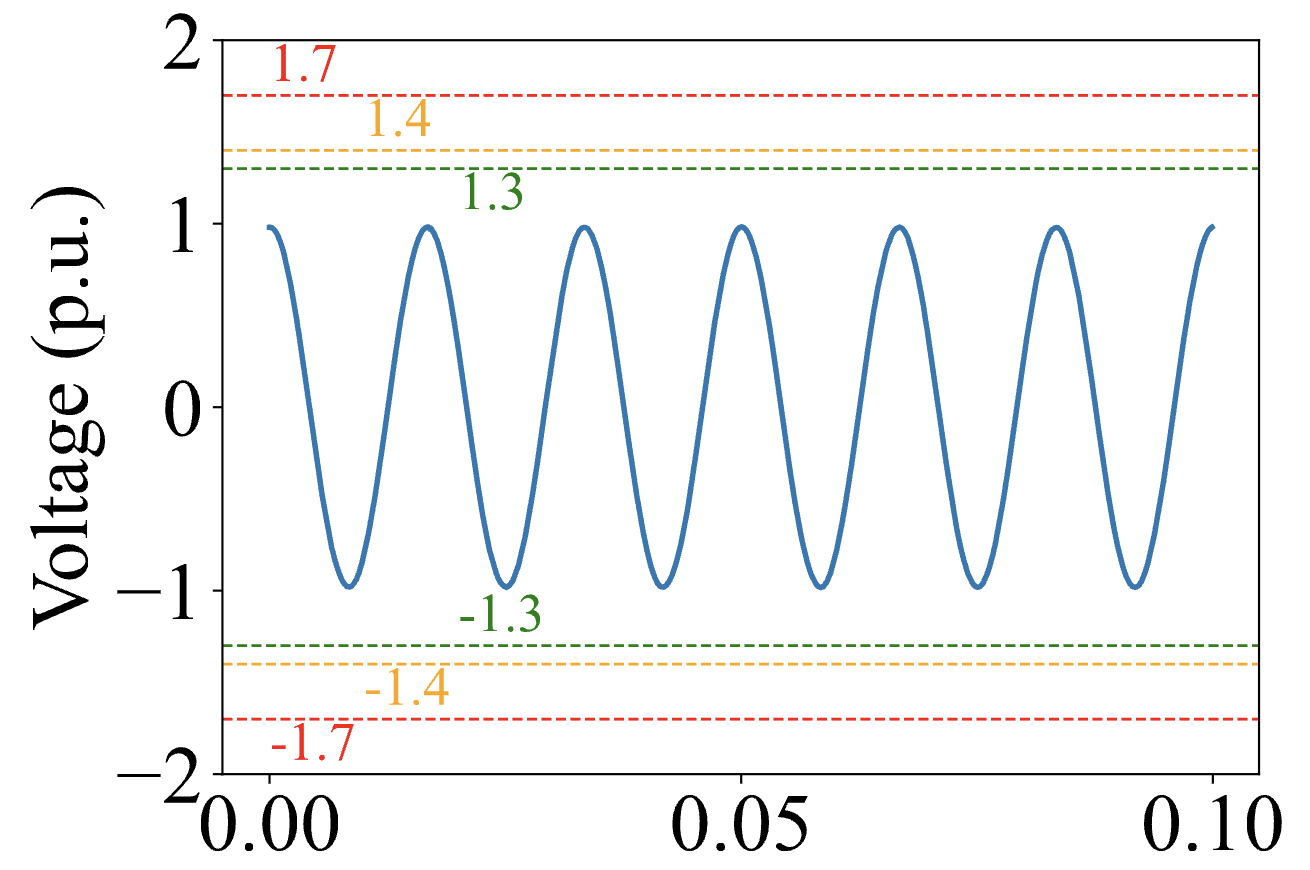}%
        \label{fig:v4}
    }
    \caption{Simulated traces of voltages. (a)~Voltage signal that violates $\varphi_5$. (b)~The first 0.1 s of Figure~\ref{fig:voltages}(a) determines the violation. (c)~~Voltage signal that satisfies $\varphi_5$. (d)~The first 0.1 s of Figure~\ref{fig:voltages}(c) shows no violation; so does the remaining signal.}
    \label{fig:voltages}
\end{figure}

\begin{figure}[t]
    \centering
    \subfloat[]{%
        \includegraphics[width=0.24\textwidth]{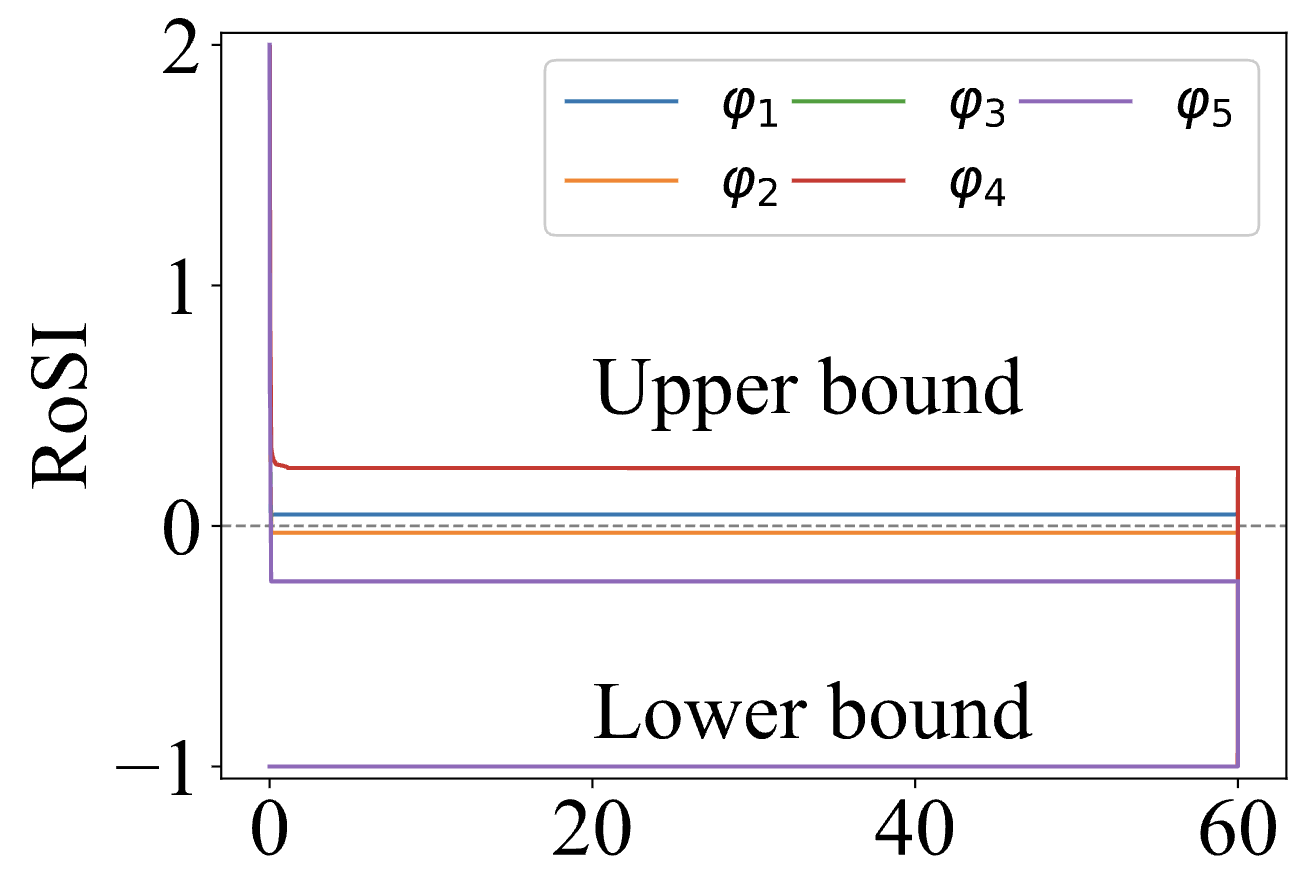}%
        \label{fig:rosi1}%
    }
    \hfill
        \subfloat[]{%
        \includegraphics[width=0.24\textwidth]{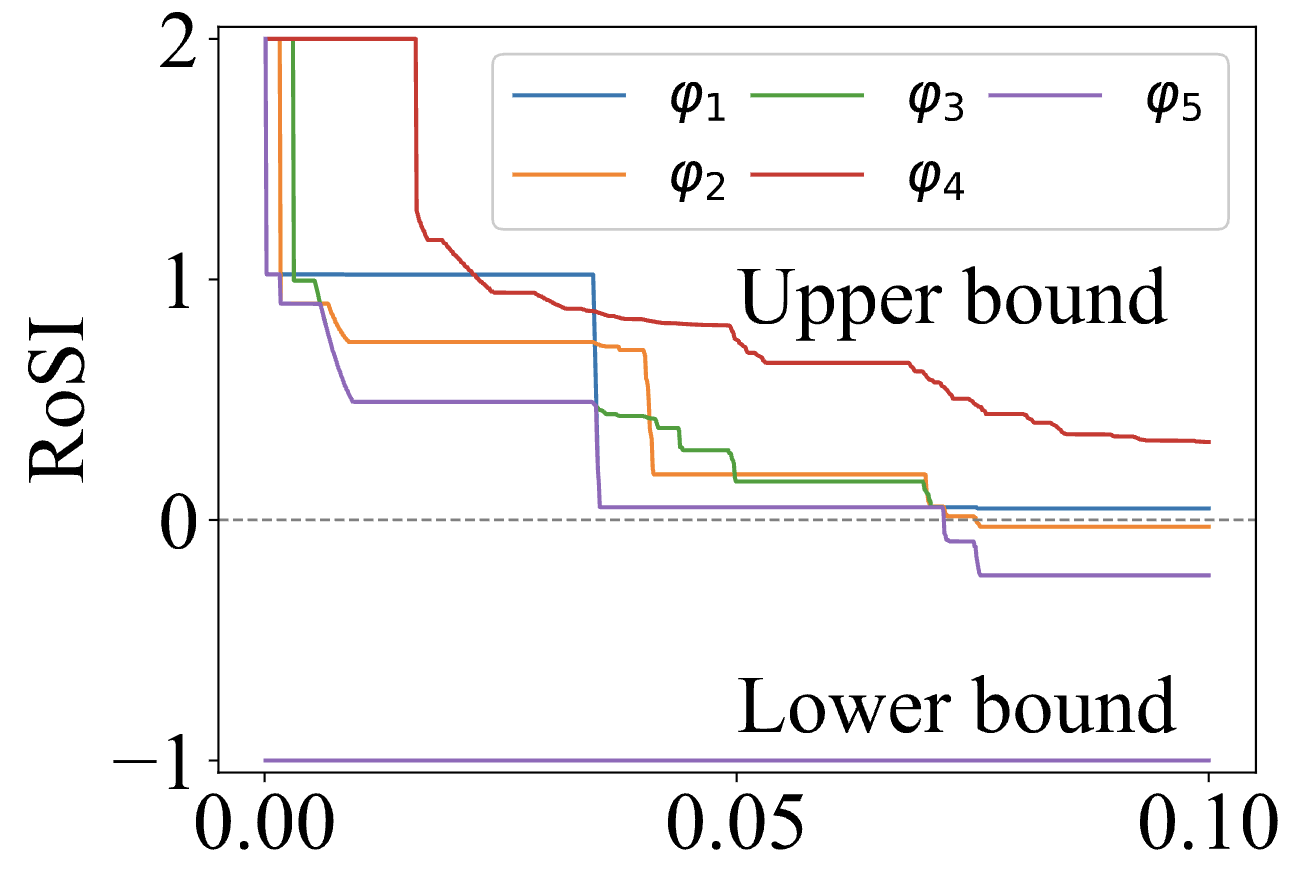}%
        \label{fig:rosi2}%
    }
    \hfill
    \subfloat[]{%
        \includegraphics[width=0.24\textwidth]{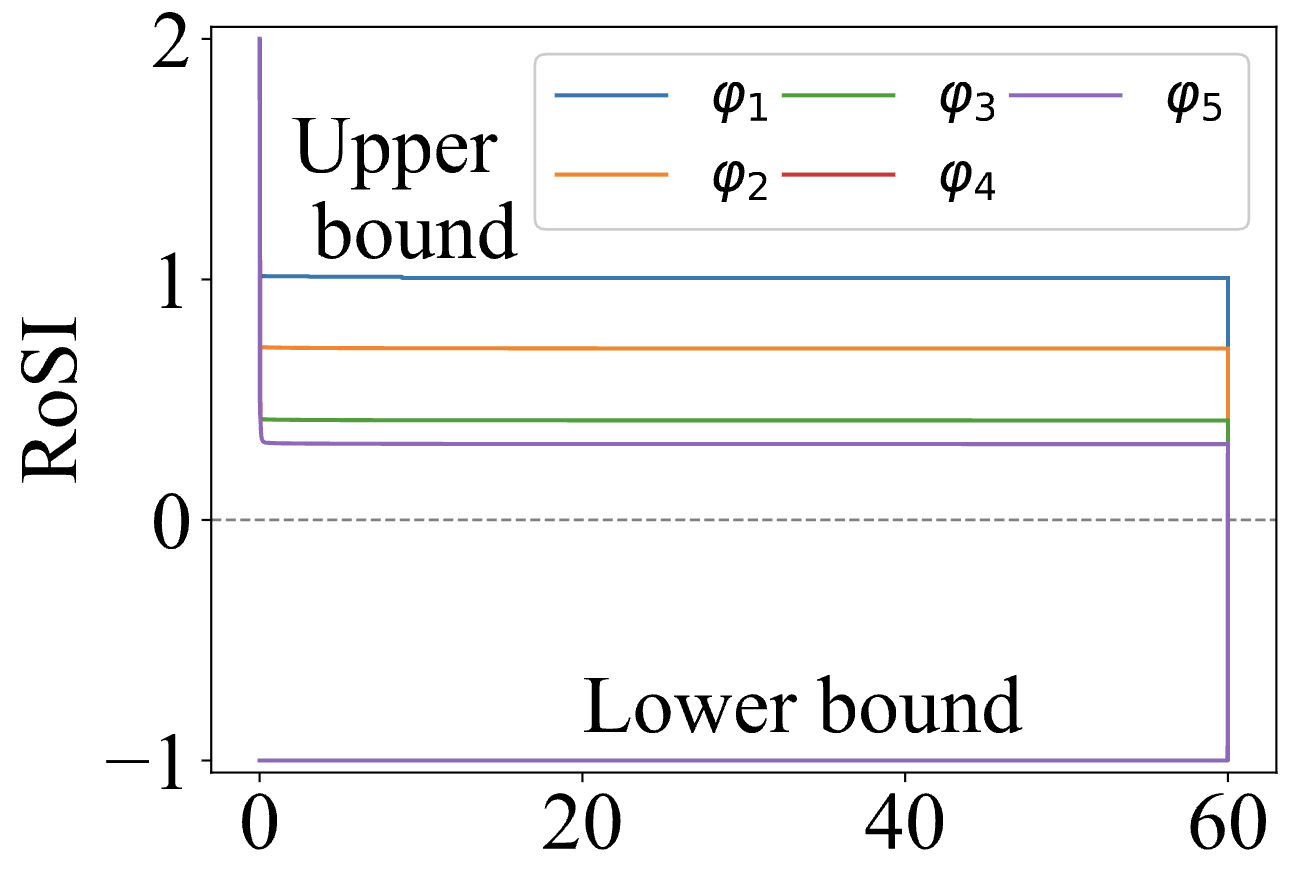}%
        \label{fig:rosi3}%
    }
    \hfill
        \subfloat[]{%
        \includegraphics[width=0.24\textwidth]{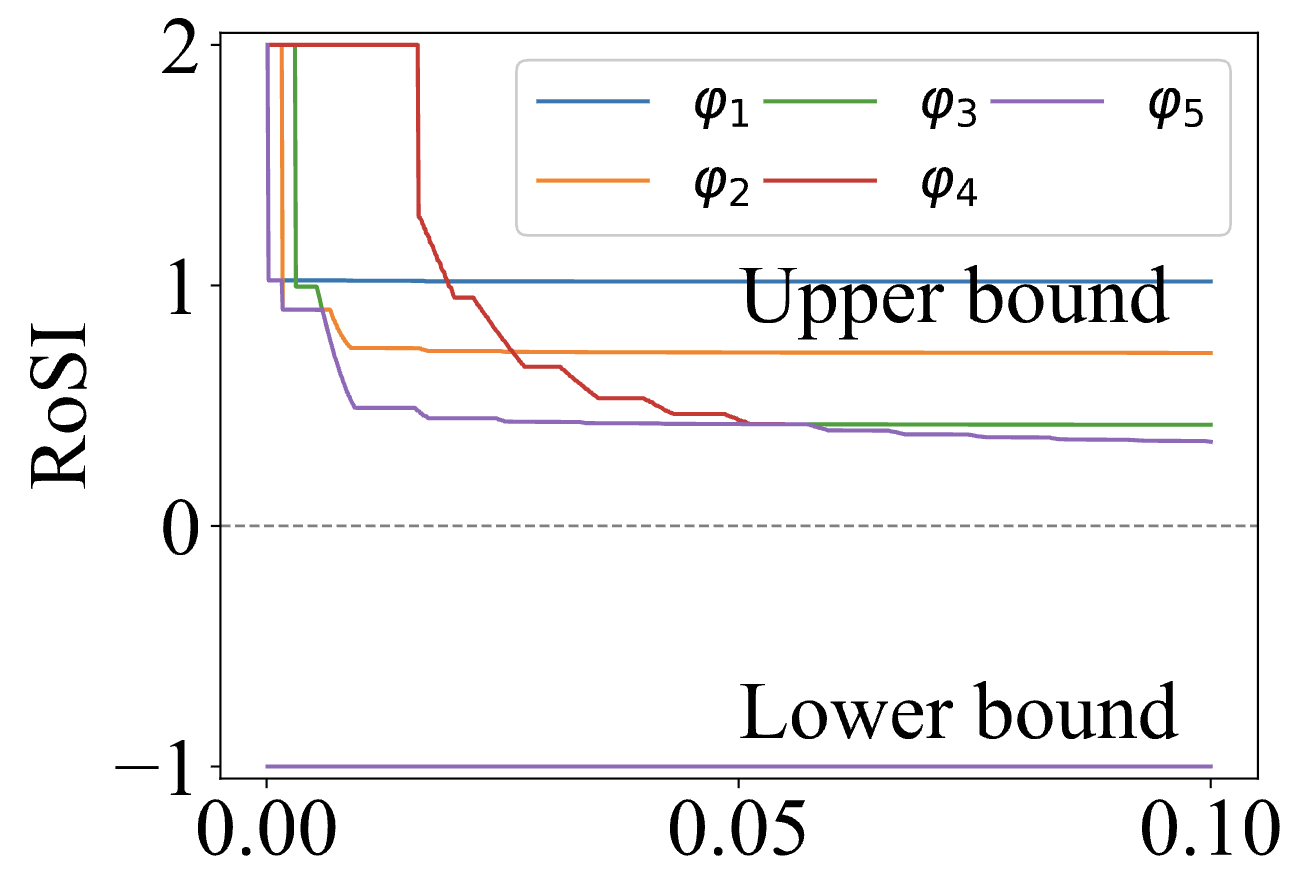}%
        \label{fig:rosi4}%
    }
    \caption{RoSIs for $\varphi_1$ to $\varphi_5$ are analyzed for Figures~\ref{fig:voltages}(a)-(d).
    (a)~The RoSI of $\varphi_5$ is sufficient to determine the violation of $\varphi_5$ w.r.t.\ the signal.
    (b)~The upper bound of the RoSI of $\varphi_5$ turns  negative shortly after $t=0.07$, allowing for an early termination.
    (c)~The lower bound of the RoSI of $\varphi_5$ becomes positive near the end of the trace. This occurs because the rest short time is sufficient for $\varphi_5$ to be violated.
    (d)~The lower bound of the RoSI for $\varphi_5$ becomes positive near the end of the trace. This occurs because a very short time window is sufficient for $\varphi_5$ to be violated.}
    \label{fig:rosi}
\end{figure}

We study the microgrid case presented in Example~\ref{example:1}. 
We implement a microgrid model in MATLAB Simulink\textregistered, as shown in Figure~\ref{fig:microgrid simulink}. 
The system consists of two three-phase sources and two transformers located on the left and right sides.
The three-phase sources generate 10.5 kV, 60 Hz power, with the transformer having a ratio of 10.5 kV/110 kV.
Power is supplied to an RLC load through a distributed parameter line.
A three-phase fault module is added on the load side, in the center, to simulate short circuits.
The acceptable region defines the maximum cumulative durations permitted for different levels of instantaneous overvoltage within any one-minute time window. This acceptable region can be described using a CT-STL formula $\varphi$ as follows.
\begin{align*}
    \varphi_1 &=\; v \leq 2 \\
    \varphi_2 &=\; \mathbf{C}_{[0,600000]}^{600000-16}\,v < 1.7 \\
    \varphi_3&=\; \mathbf{C}_{[0,600000]}^{600000-30}\,v < 1.4 \\
    \varphi_4&=\; \mathbf{C}_{[0,600000]}^{600000-160}\,v < 1.3 \; \\
    \varphi_5 &= \mathbf{G}_{[0,600000]}\,\;(\varphi_1 \wedge \varphi_2 \wedge \varphi_3 \wedge \varphi_4) 
\end{align*}


\begin{table}[h]
    \centering
    \begin{tabular}{c c c c c c c}
        \toprule
        \textbf{Requirement} & \textbf{Num.} & \textbf{Early} & \multicolumn{2}{c}{\textbf{Simulation time (s)}} & \multicolumn{2}{c}{\textbf{Overheads(s)}}\\
        \cmidrule(lr){4-5} \cmidrule(lr){6-7}
        & \textbf{Traces} & \textbf{Termination} & \textbf{Offline} & \textbf{Online} & \textbf{Naive} &\textbf{Algorithm 2}\\
        \midrule
        $\varphi_1$ & 100 & 90 & 5370.05 & 2916.37 & $1.32\times10^4$ &10.82\\
        $\varphi_2$ & 100 & 98 & 5370.05 & 3619.59 & $1.54\times 10^4$&129.04\\
        $\varphi_3$ & 100 & 99   & 5370.05 & 3699.36 &$1.57\times 10^4$&132.29\\
        $\varphi_4$ & 100 & 99 & 5370.05 & 5325.05 &$2.73\times10^4$&197.62\\
        $\varphi_5$ & 100 & 92 & 5370.05 & 2975.03 &$5.86\times10^4$&200.45\\
        \bottomrule
    \end{tabular}
    \caption{Experimental Evaluation of CT-STL online monitoring on microgrid.}
    \label{tab:mc monitor time}
\end{table}


First, two B-C phase short circuits are set between $t=0.01-0.033\,s$ and 0$t=.055-0.068\,s$, respectively. 
Figures~\ref{fig:voltages}(a) and (b) show the first 1 minute and the first 0.1 s of an example phase-B voltage signal, respectively, \emph{with} the short circuit.
Additionally, Figures~\ref{fig:voltages}(c) and (d) show the first 1 minute and the first 0.1 s of an example phase-B voltage signal, respectively, \emph{without} the short circuit.
The RoSIs of these examples at $t=0$, computed using the online monitoring Algorithm~\ref{alg:updateworklist}, are shown in Figure~\ref{fig:rosi}.
Note that we set the maximum interval robustness to $[-1,2]$ for presentation purposes.
We can see that $\varphi_5$ is the conjunction of $\varphi_1$ to $\varphi_4$, so its RoSI is the interval arithmetic minimum of the RoSIs of $\varphi_1$ to $\varphi_4$. 
Furthermore, the robust interval reduces as time advances and our online monitoring algorithm can determine whether CT-STL robustness semantics is positive or negative with partial signals. 
For example, Figure~\ref{fig:rosi}(b) demonstrates the upper bounds of the RoSIs of $\varphi_2, \varphi_3,\varphi_5$ becoming negative at $t= 0.0754\,s, 0.072\,s, 0.072\,s$, respectively. 
This is because, between $t= 0-0.0754\,s$, the cumulative time where the absolute voltage value exceeds 1.7 p.u.\ is more than 1.6\,ms, thus $\varphi_2$ is violated and our online monitoring algorithm can decide an early termination. 
Figures~\ref{fig:rosi}(c) shows no overvoltage over the interval $[0,600000]$. 
However, the satisfaction $\varphi_1$ to $\varphi_5$ can only be conclusively determined near the traces' completion because the cumulative-time thresholds represent less than $0.03\%$ of the total trace duration.
Even a few remaining unsimulated points could potentially violate the requirements.

We evaluate our online monitoring algorithm, as shown in Table~\ref{tab:mc monitor time}. 
We simulate 100 instances of 60\,s traces with time steps of 0.1\,ms.
The online algorithm can end a simulation earlier (either by detecting a violation or by identifying a definitive robust satisfaction interval), resulting in substantial time savings.
Table~\ref{tab:mc monitor time} demonstrates that online monitoring saves up to $46\%$ simulation time~($>30\%$ in a majority of cases).
We also implement a naive online monitoring algorithm that use a modification of the offline monitoring algorithm to recursively compute the robust satisfaction intervals as defined in Definition~\ref{def:rosi}.
The online algorithm exhibits less computational overhead compared with the naive online approach consistently by a factor of 118-1200x.

\subsection{Artificial Pancreas}

The artificial pancreas is a system to provide insulin therapy for Type 1 diabetes (T1D) patients. 
It consists of an insulin infusion pump and a subcutaneous Continuous Glucose Monitor (CGM) to measure blood glucose (BG) levels. We use $t_{hypo}, t_{hyper}, t_{eu}$ to denote the percentage of time when a patient's BG is in hypoglycemia (i.e, BG<70 mg/dL), hyperglycemia (i.e, BG >180 mg/dL) and euglycemia (i.e, 70$\leq$ BG $\leq$ 180 mg/dL) over 24 hours, respectively. For most adults, the therapy goal is to make $t_{hypo}<4\%, t_{hyper} < 25\%, t_{eu} > 70 \%$~\cite{glycemicgoals}. We use $x$ to denote the BG measurements signal. Then we specify our control purpose using CT-STL formulas as follows.

\begin{align*}
    \phi_1 = & \;\neg \mathbf{C}_{[0,T_1]}^{0.04\times T_1} (x<70) \\
    \phi_2 = & \;\neg \mathbf{C}_{[0,T_1]}^{0.25\times T_1} (x>180) \\
    \phi_3 = & \;\mathbf{C}_{[0,T_1]}^{0.7\times T_1} (x\geq 70 \wedge x\leq 180) \\
    \phi_4 = & \; \phi_1 \wedge \phi_2 \wedge \phi_3 \\
    \phi_5 = & \; \mathbf{G}_{[0,T_2]} \phi_4
\end{align*}

These formulas specify least cumulative time each BG requirement holds. For example, $\varphi_3$ specifies the cumulative time that BG maintains in the euglycemia state should be more than $70\%$ of $T_1$. 
We set the monitor horizon as 24 hours~\cite{glycemicgoals}, i.e., $T_1=24*60/5 =288$ with a sampling rate of 5 minutes. 
We set $T_2 = 36$ (3 hours) and 72 (6 hours).

\begin{figure}[t]
    \centering
    \subfloat[]{%
\includegraphics[width=0.38\textwidth]{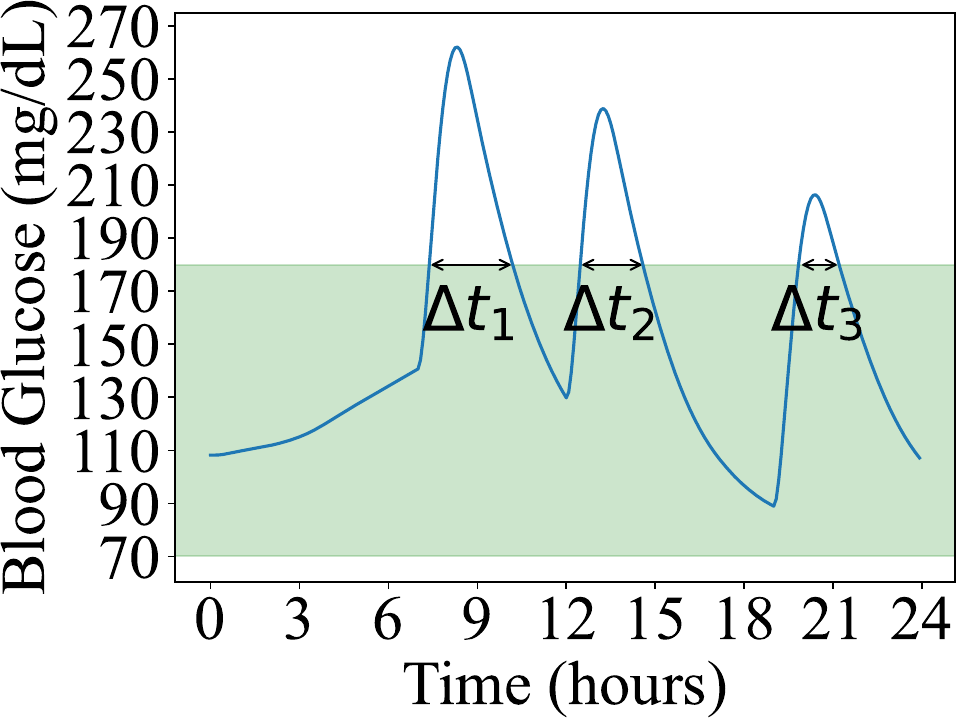}%
        \label{fig:ap1}
    }
    \hspace{4em}
    \subfloat[]{%
    \includegraphics[width=0.38\textwidth]{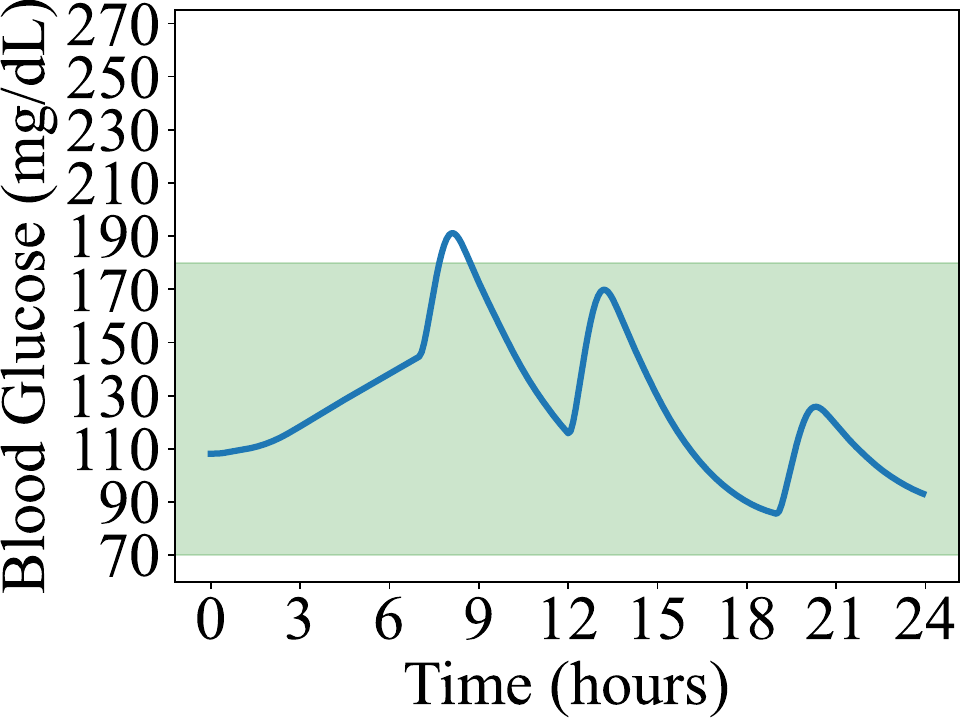}%
        \label{fig:ap2}
    }
    
    \caption{Example blood glucose measurements in 24 hours. (a)~a trace with three hyperglycemia areas. (b)~a trace with only one short hyperglycemia area.}
    \label{fig:ap measurements}
\end{figure}

\begin{figure}[t]
    \centering
    \subfloat[]{
    \includegraphics[width=0.38\linewidth]{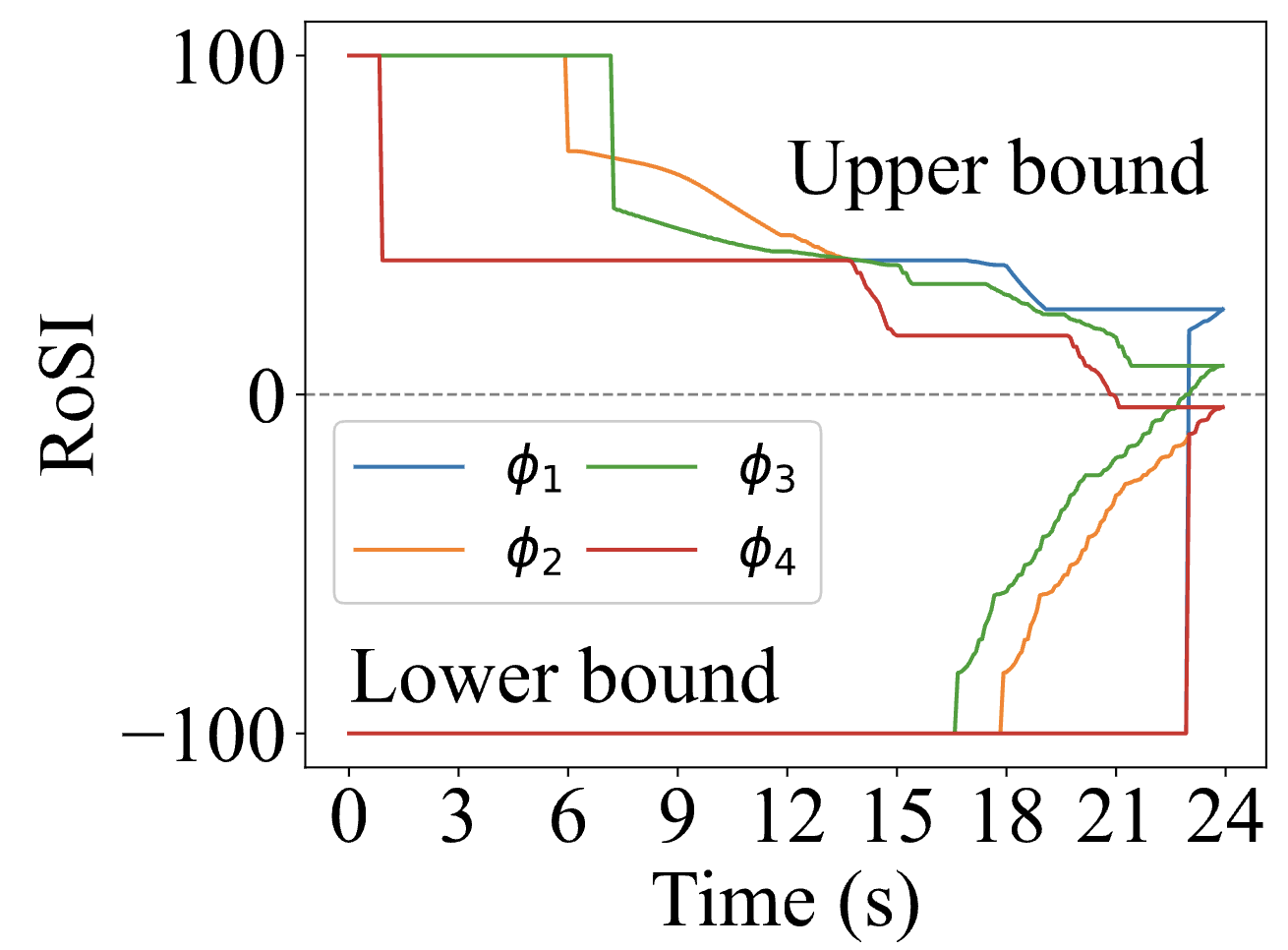}
        \label{fig:ap rosi1}
    }
    \hspace{4em}
    \subfloat[]{
    \includegraphics[width=0.38\linewidth]{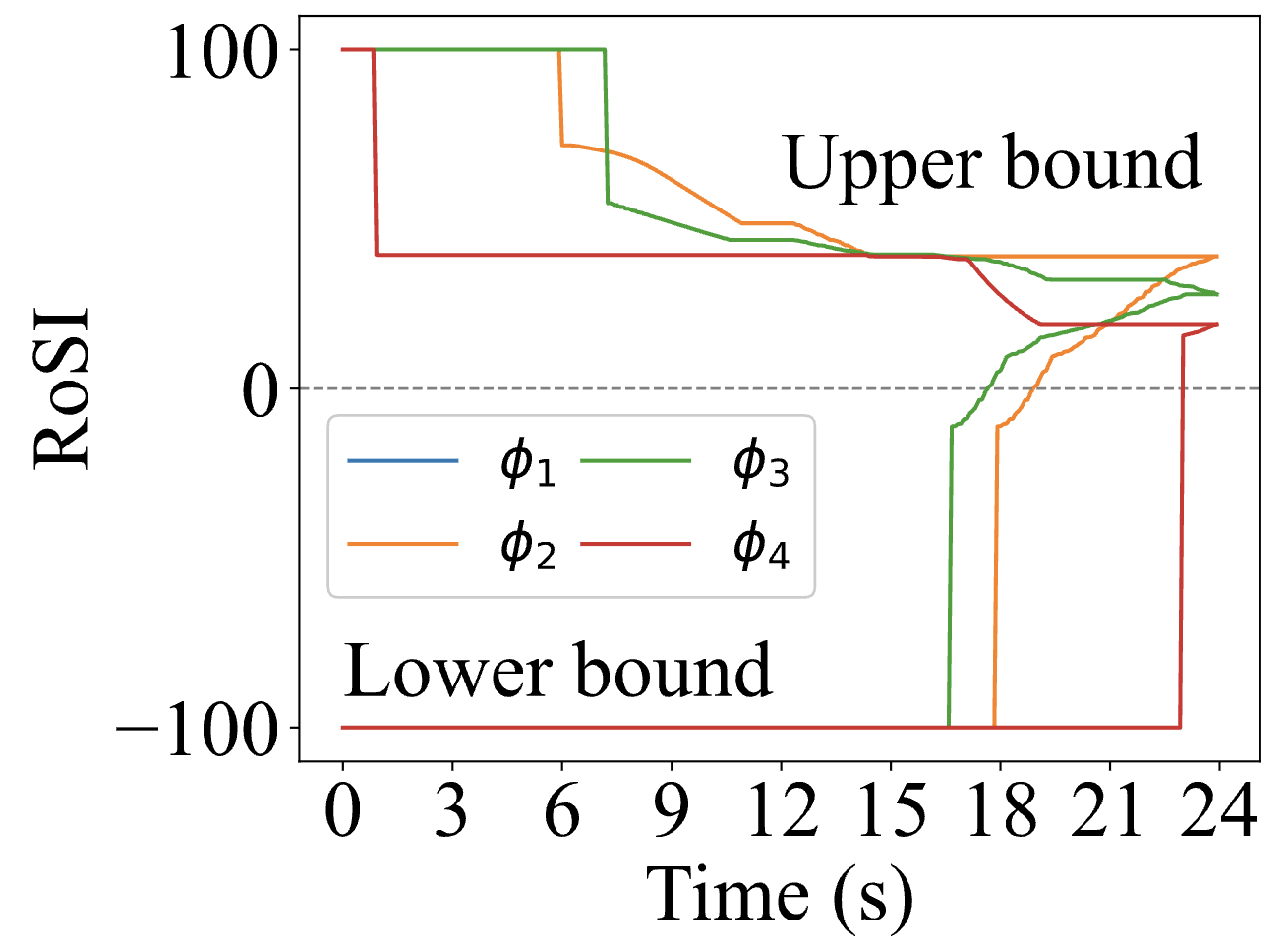}
        \label{fig:ap rosi2}
    }
    \caption{The RoSIs of $\phi_1,\phi_2,\phi_3,\phi_4$ for signals in Figure~\ref{fig:ap measurements}.}
    \label{fig:ap rosi}
\end{figure}

\begin{table}[h]
    \centering
    \begin{tabular}{c c c c c c c}
        \toprule
        \textbf{Requirement} & \textbf{Num.} & \textbf{Early} & \multicolumn{2}{c}{\textbf{Simulation time (s)}} & \multicolumn{2}{c}{\textbf{Overheads (s)}}\\
        \cmidrule(lr){4-5} \cmidrule{6-7}
        & \textbf{Traces} & \textbf{Termination} & \textbf{Offline} & \textbf{Online} &\textbf{Naive} & \textbf{Algorithm 2}\\
        \midrule
        $\phi_1,T_1=288$ & 100 & 100 & 11.62 &  11.14 & 302.24&30.34\\
        $\phi_2,T_1=288$ & 100 & 94 & 11.62 & 9.84 & 325.58&32.50\\
        $\phi_3,T_1=288$ & 100 & 99   & 11.62 & 10.15 & 313.27&28.61\\
        $\phi_4,T_1=288$ & 100 & 94 & 11.62 & 10.14 & 932.72&92.22\\
        $\phi_5,T_1=288, T_2=36$ & 100 & 100 & 13.08 & 10.23 & 4544.69 &178.14\\
        $\phi_5,T_1=288, T_2=72$ & 100 & 100 & 14.53 & 10.23 & 8947.08 &235.70 \\
        $\phi_5,T_1=144, T_2=36$ & 100 & 100 & 7.26 & 5.35 & 1097.28 &58.43 \\
        \bottomrule
    \end{tabular}
    \caption{Experimental Evaluation of CT-STL online monitoring on artificial pancreas.}
    \label{tab:ap sim time}
\end{table}

We use Simulink{\textregistered} model~\cite{APMatlab} to simulate the BG levels in 24 hours, as shown in Figure~\ref{fig:ap measurements}. We adjust the meal intake amount and get two traces. Figure~\ref{fig:ap measurements}(a) shows a trace with three hyperglycemia areas while Figure~\ref{fig:ap measurements}(b) shows a trace with only one short hyperglycemia area. In Figure~\ref{fig:ap1}, $\Delta t_1=170 \,\text{min}, \Delta t_2=130 \,\text{min}, \Delta t_3=80 \,\text{min}$.

We use online monitoring algorithm to calculate the RoSIs and Figure~\ref{fig:ap rosi} demonstrates the results. 
Figures~\ref{fig:ap rosi}(a) and~\ref{fig:ap rosi}(b) are the RoSIs of $\varphi_1$ to $\varphi_5$ at $t=0$ w.r.t.\ the signals in Figures~\ref{fig:ap measurements}(a) and~\ref{fig:ap measurements}(b), respectively. 
In Figure~\ref{fig:ap rosi1}, we find the upper bounds of $\phi_2, \phi_4$ become negative at around $t=20:55$. This is because there are already more than $25\%\times24\,\text{h}=360\,\text{min}$ cumulative time when the BG is above 180\,mg/dL. This happens because $\Delta t_1 + \Delta t_2 = 300\, \text{min} < 360 \,\text{min}$ while $\Delta t_1 + \Delta t_2 + \Delta t_3 = 380 \,\text{min} > 360\, \text{min}$. 
For $\phi_1$ and $\phi_3$, their robustness becomes positive when the remaining signal is not long enough to violate the hypoglycemia requirements. In Figure~\ref{fig:ap rosi}(b), since most signal values are in the euglycemia area, all RoSIs converges to positive values.
We also conduct evaluation for the online monitoring algorithm as shown in Table~\ref{tab:ap sim time}. We simulate 100 traces of the BG level with different meal intake and applied insulin. The trace length is equal to the horizon of each formula. For each trace, we run the online monitoring Algorithm~\ref{alg:updateworklist} until early termination or the end of each trace.
Our online monitoring algorithm saves an average of $17.5\%$ of simulation time compared to the offline approach. 
Table~\ref{tab:ap sim time} also demonstrates that, compared to the naive online algorithm, our online algorithm has less overhead consistently by a factor of 10-25x. 

\section{Conclusion}\label{sec:conclusion}
We present a new specification language for cyber-physical systems that extends STL with a new cumulative temporal operator comparing with a threshold $\tau$ for how many discrete time steps its nested formula is satisfied in a closed interval of time.  This new operator, together with the standard STL ones, allows the expression of more complex properties such as microgrid energy consumption requirements.
We also extend the original notion of STL quantitative semantics to address the addition of the new cumulative temporal operator.  The definition of quantitative semantics for the cumulative temporal operator returns the $\lceil\tau\rceil$-th largest value of robustness measured in the interval. We show that this quantitative semantics is both sound and complete with the qualitative one. We also provide procedures to monitor this new logic both online and offline, and evaluate these monitoring approaches in two case studies.  We envision many exciting future research directions. For example, we plan to study several related problems, such as satisfiability and consistency checking for CT-STL requirements, testing and falsification analysis, failure explanation, and specification mining techniques for this new logic.

\bibliographystyle{ACM-Reference-Format}
\bibliography{hongkai}


\begin{thebibliography}{31}


\ifx \showCODEN    \undefined \def \showCODEN     #1{\unskip}     \fi
\ifx \showDOI      \undefined \def \showDOI       #1{#1}\fi
\ifx \showISBNx    \undefined \def \showISBNx     #1{\unskip}     \fi
\ifx \showISBNxiii \undefined \def \showISBNxiii  #1{\unskip}     \fi
\ifx \showISSN     \undefined \def \showISSN      #1{\unskip}     \fi
\ifx \showLCCN     \undefined \def \showLCCN      #1{\unskip}     \fi
\ifx \shownote     \undefined \def \shownote      #1{#1}          \fi
\ifx \showarticletitle \undefined \def \showarticletitle #1{#1}   \fi
\ifx \showURL      \undefined \def \showURL       {\relax}        \fi
\providecommand\bibfield[2]{#2}
\providecommand\bibinfo[2]{#2}
\providecommand\natexlab[1]{#1}
\providecommand\showeprint[2][]{arXiv:#2}

\bibitem[Akazaki and Hasuo(2015)]%
        {takumi2015}
\bibfield{author}{\bibinfo{person}{Takumi Akazaki} {and} \bibinfo{person}{Ichiro Hasuo}.} \bibinfo{year}{2015}\natexlab{}.
\newblock \showarticletitle{Time Robustness in MTL and Expressivity in Hybrid System Falsification}. In \bibinfo{booktitle}{\emph{Computer Aided Verification}}, \bibfield{editor}{\bibinfo{person}{Daniel Kroening} {and} \bibinfo{person}{Corina~S. P{\u{a}}s{\u{a}}reanu}} (Eds.). \bibinfo{publisher}{Springer International Publishing}, \bibinfo{address}{Cham}, \bibinfo{pages}{356--374}.
\newblock


\bibitem[Alur et~al\mbox{.}(1996)]%
        {AlurFH96}
\bibfield{author}{\bibinfo{person}{Rajeev Alur}, \bibinfo{person}{Tom{\'{a}}s Feder}, {and} \bibinfo{person}{Thomas~A. Henzinger}.} \bibinfo{year}{1996}\natexlab{}.
\newblock \showarticletitle{The Benefits of Relaxing Punctuality}.
\newblock \bibinfo{journal}{\emph{J. {ACM}}} \bibinfo{volume}{43}, \bibinfo{number}{1} (\bibinfo{year}{1996}), \bibinfo{pages}{116--146}.
\newblock
\urldef\tempurl%
\url{https://doi.org/10.1145/227595.227602}
\showDOI{\tempurl}


\bibitem[Bartocci et~al\mbox{.}(2018a)]%
        {BartocciDDFMNS18}
\bibfield{author}{\bibinfo{person}{Ezio Bartocci}, \bibinfo{person}{Jyotirmoy~V. Deshmukh}, \bibinfo{person}{Alexandre Donz{\'{e}}}, \bibinfo{person}{Georgios Fainekos}, \bibinfo{person}{Oded Maler}, \bibinfo{person}{Dejan Nickovic}, {and} \bibinfo{person}{Sriram Sankaranarayanan}.} \bibinfo{year}{2018}\natexlab{a}.
\newblock \showarticletitle{Specification-Based Monitoring of Cyber-Physical Systems: {A} Survey on Theory, Tools and Applications}.
\newblock In \bibinfo{booktitle}{\emph{Lectures on Runtime Verification - Introductory and Advanced Topics}}, \bibfield{editor}{\bibinfo{person}{Ezio Bartocci} {and} \bibinfo{person}{Yli{\`{e}}s Falcone}} (Eds.). \bibinfo{series}{Lecture Notes in Computer Science}, Vol.~\bibinfo{volume}{10457}. \bibinfo{publisher}{Springer}, \bibinfo{pages}{135--175}.
\newblock
\urldef\tempurl%
\url{https://doi.org/10.1007/978-3-319-75632-5\_5}
\showDOI{\tempurl}


\bibitem[Bartocci et~al\mbox{.}(2018b)]%
        {BartocciFMN18}
\bibfield{author}{\bibinfo{person}{Ezio Bartocci}, \bibinfo{person}{Thomas Ferr{\`{e}}re}, \bibinfo{person}{Niveditha Manjunath}, {and} \bibinfo{person}{Dejan Nickovic}.} \bibinfo{year}{2018}\natexlab{b}.
\newblock \showarticletitle{Localizing Faults in Simulink/Stateflow Models with {STL}}. In \bibinfo{booktitle}{\emph{Proc. of {HSCC} 2018: the 21st International Conference on Hybrid Systems: Computation and Control}}, \bibfield{editor}{\bibinfo{person}{Maria Prandini} {and} \bibinfo{person}{Jyotirmoy~V. Deshmukh}} (Eds.). \bibinfo{publisher}{{ACM}}, \bibinfo{pages}{197--206}.
\newblock
\urldef\tempurl%
\url{https://doi.org/10.1145/3178126.3178131}
\showDOI{\tempurl}


\bibitem[Bartocci et~al\mbox{.}(2021)]%
        {BartocciMMMN21}
\bibfield{author}{\bibinfo{person}{Ezio Bartocci}, \bibinfo{person}{Niveditha Manjunath}, \bibinfo{person}{Leonardo Mariani}, \bibinfo{person}{Cristinel Mateis}, {and} \bibinfo{person}{Dejan Nickovic}.} \bibinfo{year}{2021}\natexlab{}.
\newblock \showarticletitle{CPSDebug: Automatic failure explanation in {CPS} models}.
\newblock \bibinfo{journal}{\emph{Int. J. Softw. Tools Technol. Transf.}} \bibinfo{volume}{23}, \bibinfo{number}{5} (\bibinfo{year}{2021}), \bibinfo{pages}{783--796}.
\newblock
\urldef\tempurl%
\url{https://doi.org/10.1007/S10009-020-00599-4}
\showDOI{\tempurl}


\bibitem[Bartocci et~al\mbox{.}(2023)]%
        {BartocciMNY23}
\bibfield{author}{\bibinfo{person}{Ezio Bartocci}, \bibinfo{person}{Leonardo Mariani}, \bibinfo{person}{Dejan Nickovic}, {and} \bibinfo{person}{Drishti Yadav}.} \bibinfo{year}{2023}\natexlab{}.
\newblock \showarticletitle{Property-Based Mutation Testing}. In \bibinfo{booktitle}{\emph{Proc. of {ICST} 2023: {IEEE} Conference on Software Testing, Verification and Validation}}. \bibinfo{publisher}{{IEEE}}, \bibinfo{pages}{222--233}.
\newblock
\urldef\tempurl%
\url{https://doi.org/10.1109/ICST57152.2023.00029}
\showDOI{\tempurl}


\bibitem[Bollobas et~al\mbox{.}(1996)]%
        {bollobas1996best}
\bibfield{author}{\bibinfo{person}{Bela Bollobas}, \bibinfo{person}{Trevor~I. Fenner}, {and} \bibinfo{person}{Alan~M. Frieze}.} \bibinfo{year}{1996}\natexlab{}.
\newblock \showarticletitle{On the best case of heapsort}.
\newblock \bibinfo{journal}{\emph{Journal of Algorithms}} \bibinfo{volume}{20}, \bibinfo{number}{2} (\bibinfo{year}{1996}), \bibinfo{pages}{205--217}.
\newblock


\bibitem[Chen et~al\mbox{.}(2022)]%
        {chen2022stl}
\bibfield{author}{\bibinfo{person}{Hongkai Chen}, \bibinfo{person}{Shan Lin}, \bibinfo{person}{Scott~A Smolka}, {and} \bibinfo{person}{Nicola Paoletti}.} \bibinfo{year}{2022}\natexlab{}.
\newblock \showarticletitle{An STL-based formulation of resilience in cyber-physical systems}. In \bibinfo{booktitle}{\emph{International Conference on Formal Modeling and Analysis of Timed Systems}}. Springer, \bibinfo{pages}{117--135}.
\newblock


\bibitem[Chen et~al\mbox{.}(2023)]%
        {chen2023stl}
\bibfield{author}{\bibinfo{person}{Hongkai Chen}, \bibinfo{person}{Scott~A Smolka}, \bibinfo{person}{Nicola Paoletti}, {and} \bibinfo{person}{Shan Lin}.} \bibinfo{year}{2023}\natexlab{}.
\newblock \showarticletitle{An STL-based approach to resilient control for cyber-physical systems}. In \bibinfo{booktitle}{\emph{Proceedings of the 26th ACM International Conference on Hybrid Systems: Computation and Control}}. \bibinfo{pages}{1--12}.
\newblock


\bibitem[Committee(2023)]%
        {glycemicgoals}
\bibfield{author}{\bibinfo{person}{American Diabetes Association Professional~Practice Committee}.} \bibinfo{year}{2023}\natexlab{}.
\newblock \showarticletitle{6. Glycemic Goals and Hypoglycemia: Standards of Care in Diabetes—2024}.
\newblock \bibinfo{journal}{\emph{Diabetes Care}} \bibinfo{volume}{47}, \bibinfo{number}{Supplement\_1} (\bibinfo{date}{12} \bibinfo{year}{2023}), \bibinfo{pages}{S111--S125}.
\newblock
\showISSN{0149-5992}
\urldef\tempurl%
\url{https://doi.org/10.2337/dc24-S006}
\showDOI{\tempurl}


\bibitem[Deshmukh et~al\mbox{.}(2017)]%
        {2017monitor}
\bibfield{author}{\bibinfo{person}{Jyotirmoy~V. Deshmukh}, \bibinfo{person}{Alexandre Donz\'{e}}, \bibinfo{person}{Shromona Ghosh}, \bibinfo{person}{Xiaoqing Jin}, \bibinfo{person}{Garvit Juniwal}, {and} \bibinfo{person}{Sanjit~A. Seshia}.} \bibinfo{year}{2017}\natexlab{}.
\newblock \showarticletitle{Robust online monitoring of signal temporal logic}.
\newblock \bibinfo{journal}{\emph{Form. Methods Syst. Des.}} \bibinfo{volume}{51}, \bibinfo{number}{1} (\bibinfo{date}{Aug.} \bibinfo{year}{2017}), \bibinfo{pages}{5–30}.
\newblock
\showISSN{0925-9856}
\urldef\tempurl%
\url{https://doi.org/10.1007/s10703-017-0286-7}
\showDOI{\tempurl}


\bibitem[Donz{\'{e}} et~al\mbox{.}(2013)]%
        {DonzeFM13}
\bibfield{author}{\bibinfo{person}{Alexandre Donz{\'{e}}}, \bibinfo{person}{Thomas Ferr{\`{e}}re}, {and} \bibinfo{person}{Oded Maler}.} \bibinfo{year}{2013}\natexlab{}.
\newblock \showarticletitle{Efficient Robust Monitoring for {STL}}. In \bibinfo{booktitle}{\emph{Proc. of {CAV} 2013: the 25th International Conference on Computer Aided Verification}} \emph{(\bibinfo{series}{LNCS}, Vol.~\bibinfo{volume}{8044})}. \bibinfo{publisher}{Springer}, \bibinfo{pages}{264--279}.
\newblock
\urldef\tempurl%
\url{https://doi.org/10.1007/978-3-642-39799-8\_19}
\showDOI{\tempurl}


\bibitem[Donz{\'e} and Maler(2010)]%
        {donze2010robust}
\bibfield{author}{\bibinfo{person}{Alexandre Donz{\'e}} {and} \bibinfo{person}{Oded Maler}.} \bibinfo{year}{2010}\natexlab{}.
\newblock \showarticletitle{Robust Satisfaction of Temporal Logic over Real-Valued Signals}. In \bibinfo{booktitle}{\emph{Proceedings of International Conference on Formal Modeling and Analysis of Timed Systems (FORMATS~'10)}}. \bibinfo{publisher}{Springer}, \bibinfo{address}{Klosterneuburg, Austria}, \bibinfo{pages}{92--106}.
\newblock
\urldef\tempurl%
\url{https://doi.org/10.1007/978-3-642-15297-9_9}
\showDOI{\tempurl}


\bibitem[Fainekos et~al\mbox{.}(2019)]%
        {FainekosH019}
\bibfield{author}{\bibinfo{person}{Georgios Fainekos}, \bibinfo{person}{Bardh Hoxha}, {and} \bibinfo{person}{Sriram Sankaranarayanan}.} \bibinfo{year}{2019}\natexlab{}.
\newblock \showarticletitle{Robustness of Specifications and Its Applications to Falsification, Parameter Mining, and Runtime Monitoring with S-TaLiRo}. In \bibinfo{booktitle}{\emph{Proc. of {RV} 2019: the 19th International Conference on Runtime Verification}} \emph{(\bibinfo{series}{LNCS}, Vol.~\bibinfo{volume}{11757})}. \bibinfo{publisher}{Springer}, \bibinfo{pages}{27--47}.
\newblock
\urldef\tempurl%
\url{https://doi.org/10.1007/978-3-030-32079-9\_3}
\showDOI{\tempurl}


\bibitem[Fainekos and Pappas(2009)]%
        {fainekos2009robustness}
\bibfield{author}{\bibinfo{person}{Georgios~E Fainekos} {and} \bibinfo{person}{George~J Pappas}.} \bibinfo{year}{2009}\natexlab{}.
\newblock \showarticletitle{Robustness of temporal logic specifications for continuous-time signals}.
\newblock \bibinfo{journal}{\emph{Theoretical Computer Science}} \bibinfo{volume}{410}, \bibinfo{number}{42} (\bibinfo{year}{2009}), \bibinfo{pages}{4262--4291}.
\newblock


\bibitem[Haghighi et~al\mbox{.}(2019)]%
        {haghighi2019control}
\bibfield{author}{\bibinfo{person}{Iman Haghighi}, \bibinfo{person}{Noushin Mehdipour}, \bibinfo{person}{Ezio Bartocci}, {and} \bibinfo{person}{Calin Belta}.} \bibinfo{year}{2019}\natexlab{}.
\newblock \showarticletitle{Control from signal temporal logic specifications with smooth cumulative quantitative semantics}. In \bibinfo{booktitle}{\emph{2019 IEEE 58th Conference on Decision and Control (CDC)}}. IEEE, \bibinfo{pages}{4361--4366}.
\newblock


\bibitem[IEEE 1547-2018.(2018)]%
        {IEEE1547-2018}
IEEE 1547-2018. \bibinfo{year}{2018}\natexlab{}.
\newblock \bibinfo{booktitle}{\emph{IEEE Standard for Interconnection and Interoperability of Distributed Energy Resources with Associated Electric Power Systems Interfaces}}.
\newblock \bibinfo{type}{Standard}. \bibinfo{institution}{IEEE}.
\newblock


\bibitem[Jaksic et~al\mbox{.}(2015)]%
        {JaksicBGKNN15}
\bibfield{author}{\bibinfo{person}{Stefan Jaksic}, \bibinfo{person}{Ezio Bartocci}, \bibinfo{person}{Radu Grosu}, \bibinfo{person}{Reinhard Kloibhofer}, \bibinfo{person}{Thang Nguyen}, {and} \bibinfo{person}{Dejan Nickovic}.} \bibinfo{year}{2015}\natexlab{}.
\newblock \showarticletitle{From signal temporal logic to {FPGA} monitors}. In \bibinfo{booktitle}{\emph{Proc. of {ACM/IEEE} International Conference on Formal Methods and Models for Codesign}}. \bibinfo{publisher}{{IEEE}}, \bibinfo{pages}{218--227}.
\newblock
\urldef\tempurl%
\url{https://doi.org/10.1109/MEMCOD.2015.7340489}
\showDOI{\tempurl}


\bibitem[Jak{\v{s}}i{\'c} et~al\mbox{.}(2018)]%
        {jakvsic2018quantitative}
\bibfield{author}{\bibinfo{person}{Stefan Jak{\v{s}}i{\'c}}, \bibinfo{person}{Ezio Bartocci}, \bibinfo{person}{Radu Grosu}, \bibinfo{person}{Thang Nguyen}, {and} \bibinfo{person}{Dejan Ni{\v{c}}kovi{\'c}}.} \bibinfo{year}{2018}\natexlab{}.
\newblock \showarticletitle{Quantitative monitoring of STL with edit distance}.
\newblock \bibinfo{journal}{\emph{Formal methods in system design}}  \bibinfo{volume}{53} (\bibinfo{year}{2018}), \bibinfo{pages}{83--112}.
\newblock


\bibitem[Jaksic et~al\mbox{.}(2018)]%
        {JaksicBGN18}
\bibfield{author}{\bibinfo{person}{Stefan Jaksic}, \bibinfo{person}{Ezio Bartocci}, \bibinfo{person}{Radu Grosu}, {and} \bibinfo{person}{Dejan Nickovic}.} \bibinfo{year}{2018}\natexlab{}.
\newblock \showarticletitle{An Algebraic Framework for Runtime Verification}.
\newblock \bibinfo{journal}{\emph{{IEEE} Trans. Comput. Aided Des. Integr. Circuits Syst.}} \bibinfo{volume}{37}, \bibinfo{number}{11} (\bibinfo{year}{2018}), \bibinfo{pages}{2233--2243}.
\newblock
\urldef\tempurl%
\url{https://doi.org/10.1109/TCAD.2018.2858460}
\showDOI{\tempurl}


\bibitem[Kapinski et~al\mbox{.}(2016)]%
        {Kapinski2016STLibAL}
\bibfield{author}{\bibinfo{person}{James Kapinski}, \bibinfo{person}{Xiaoqing Jin}, \bibinfo{person}{Jyotirmoy~V. Deshmukh}, \bibinfo{person}{Alexandre Donz{\'e}}, \bibinfo{person}{Tomoya Yamaguchi}, \bibinfo{person}{Hisahiro Ito}, \bibinfo{person}{Tomoyuki Kaga}, \bibinfo{person}{Shunsuke Kobuna}, {and} \bibinfo{person}{Sanjit~A. Seshia}.} \bibinfo{year}{2016}\natexlab{}.
\newblock \showarticletitle{ST-Lib: A Library for Specifying and Classifying Model Behaviors}.
\newblock \bibinfo{journal}{\emph{SAE Technical Paper Series}} (\bibinfo{year}{2016}).
\newblock
\urldef\tempurl%
\url{https://api.semanticscholar.org/CorpusID:13481741}
\showURL{%
\tempurl}


\bibitem[Knuth and Patashnik(2003)]%
        {knuth2003concrete}
\bibfield{author}{\bibinfo{person}{Donald~Ervin Knuth} {and} \bibinfo{person}{Oren Patashnik}.} \bibinfo{year}{2003}\natexlab{}.
\newblock \bibinfo{booktitle}{\emph{Concrete mathematics: a foundation for computer science}}.
\newblock \bibinfo{publisher}{Addison-Wesley}.
\newblock


\bibitem[Lindemann et~al\mbox{.}(2023)]%
        {lindemann2023conformal}
\bibfield{author}{\bibinfo{person}{Lars Lindemann}, \bibinfo{person}{Xin Qin}, \bibinfo{person}{Jyotirmoy~V Deshmukh}, {and} \bibinfo{person}{George~J Pappas}.} \bibinfo{year}{2023}\natexlab{}.
\newblock \showarticletitle{Conformal prediction for {STL} runtime verification}. In \bibinfo{booktitle}{\emph{Proceedings of the ACM/IEEE 14th International Conference on Cyber-Physical Systems (with CPS-IoT Week 2023)}}. \bibinfo{pages}{142--153}.
\newblock


\bibitem[Maler and Nickovic(2004)]%
        {maler2004monitoring}
\bibfield{author}{\bibinfo{person}{Oded Maler} {and} \bibinfo{person}{Dejan Nickovic}.} \bibinfo{year}{2004}\natexlab{}.
\newblock \showarticletitle{Monitoring Temporal Properties of Continuous Signals}.
\newblock In \bibinfo{booktitle}{\emph{Proceedings of Formal Techniques, Modelling and Analysis of Timed and Fault-Tolerant Systems (FTRTFT~'04)}}. \bibinfo{publisher}{Springer}, \bibinfo{address}{Grenoble, France}, \bibinfo{pages}{152--166}.
\newblock
\urldef\tempurl%
\url{https://doi.org/10.1007/978-3-540-30206-3_12}
\showDOI{\tempurl}


\bibitem[Molin et~al\mbox{.}(2023)]%
        {MolinANZBE23}
\bibfield{author}{\bibinfo{person}{Adam Molin}, \bibinfo{person}{Edgar~A. Aguilar}, \bibinfo{person}{Dejan Nickovic}, \bibinfo{person}{Mengjia Zhu}, \bibinfo{person}{Alberto Bemporad}, {and} \bibinfo{person}{Hasan Esen}.} \bibinfo{year}{2023}\natexlab{}.
\newblock \showarticletitle{Specification-Guided Critical Scenario Identification for Automated Driving}. In \bibinfo{booktitle}{\emph{Proc. of {FM} 2023: the 25th International Symposium on Formal Methods}} \emph{(\bibinfo{series}{Lecture Notes in Computer Science}, Vol.~\bibinfo{volume}{14000})}, \bibfield{editor}{\bibinfo{person}{Marsha Chechik}, \bibinfo{person}{Joost{-}Pieter Katoen}, {and} \bibinfo{person}{Martin Leucker}} (Eds.). \bibinfo{publisher}{Springer}, \bibinfo{pages}{610--621}.
\newblock
\urldef\tempurl%
\url{https://doi.org/10.1007/978-3-031-27481-7\_35}
\showDOI{\tempurl}


\bibitem[Rodionova et~al\mbox{.}(2016)]%
        {RodionovaBNG16}
\bibfield{author}{\bibinfo{person}{Al{\"{e}}na Rodionova}, \bibinfo{person}{Ezio Bartocci}, \bibinfo{person}{Dejan Nickovic}, {and} \bibinfo{person}{Radu Grosu}.} \bibinfo{year}{2016}\natexlab{}.
\newblock \showarticletitle{Temporal Logic as Filtering}. In \bibinfo{booktitle}{\emph{Proc. of {HSCC} 2016: the 19th International Conference on Hybrid Systems: Computation and Control}}, \bibfield{editor}{\bibinfo{person}{Alessandro Abate} {and} \bibinfo{person}{Georgios Fainekos}} (Eds.). \bibinfo{publisher}{{ACM}}, \bibinfo{pages}{11--20}.
\newblock
\urldef\tempurl%
\url{https://doi.org/10.1145/2883817.2883839}
\showDOI{\tempurl}


\bibitem[Roehm et~al\mbox{.}(2017)]%
        {RoehmHM17}
\bibfield{author}{\bibinfo{person}{Hendrik Roehm}, \bibinfo{person}{Thomas Heinz}, {and} \bibinfo{person}{Eva~Charlotte Mayer}.} \bibinfo{year}{2017}\natexlab{}.
\newblock \showarticletitle{STLInspector: {STL} Validation with Guarantees}. In \bibinfo{booktitle}{\emph{Proc. of {CAV} 2017: the 29th International Conference on Computer Aided Verification - 29th International Conference}} \emph{(\bibinfo{series}{Lecture Notes in Computer Science}, Vol.~\bibinfo{volume}{10426})}, \bibfield{editor}{\bibinfo{person}{Rupak Majumdar} {and} \bibinfo{person}{Viktor Kuncak}} (Eds.). \bibinfo{publisher}{Springer}, \bibinfo{pages}{225--232}.
\newblock
\urldef\tempurl%
\url{https://doi.org/10.1007/978-3-319-63387-9\_11}
\showDOI{\tempurl}


\bibitem[Silvetti et~al\mbox{.}(2018)]%
        {SilvettiNBB18}
\bibfield{author}{\bibinfo{person}{Simone Silvetti}, \bibinfo{person}{Laura Nenzi}, \bibinfo{person}{Ezio Bartocci}, {and} \bibinfo{person}{Luca Bortolussi}.} \bibinfo{year}{2018}\natexlab{}.
\newblock \showarticletitle{Signal Convolution Logic}. In \bibinfo{booktitle}{\emph{Proc. of {ATVA} 2018: the 16th International Symposium on Automated Technology for Verification and Analysis}} \emph{(\bibinfo{series}{Lecture Notes in Computer Science}, Vol.~\bibinfo{volume}{11138})}, \bibfield{editor}{\bibinfo{person}{Shuvendu~K. Lahiri} {and} \bibinfo{person}{Chao Wang}} (Eds.). \bibinfo{publisher}{Springer}, \bibinfo{pages}{267--283}.
\newblock
\urldef\tempurl%
\url{https://doi.org/10.1007/978-3-030-01090-4\_16}
\showDOI{\tempurl}


\bibitem[Sun et~al\mbox{.}(2024)]%
        {sun2024redriver}
\bibfield{author}{\bibinfo{person}{Yang Sun}, \bibinfo{person}{Christopher~M Poskitt}, \bibinfo{person}{Xiaodong Zhang}, {and} \bibinfo{person}{Jun Sun}.} \bibinfo{year}{2024}\natexlab{}.
\newblock \showarticletitle{REDriver: Runtime enforcement for autonomous vehicles}. In \bibinfo{booktitle}{\emph{Proceedings of the IEEE/ACM 46th International Conference on Software Engineering}}. \bibinfo{pages}{1--12}.
\newblock


\bibitem[The~MathWorks(2025)]%
        {APMatlab}
\bibfield{author}{\bibinfo{person}{Inc. The~MathWorks}.} \bibinfo{year}{2025}\natexlab{}.
\newblock \bibinfo{booktitle}{\emph{Design Controller for Artificial Pancreas Using Fuzzy Logic}}.
\newblock The MathWorks, Inc., Natick, Massachusetts.
\newblock
\urldef\tempurl%
\url{https://www.mathworks.com/help/fuzzy/design-fuzzy-logic-controller-for-artificial-pancreas.html}
\showURL{%
\tempurl}
\newblock
\shownote{Accessed: 2025-03-21}.


\bibitem[Zhao et~al\mbox{.}(2022)]%
        {zhao2022}
\bibfield{author}{\bibinfo{person}{Deng Zhao}, \bibinfo{person}{Zhangbing Zhou}, \bibinfo{person}{Zhipeng Cai}, \bibinfo{person}{Teng Long}, \bibinfo{person}{Sami Yangui}, {and} \bibinfo{person}{Xiao Xue}.} \bibinfo{year}{2022}\natexlab{}.
\newblock \showarticletitle{ASTL: Accumulative Signal Temporal Logic for IoT Service Monitoring}. In \bibinfo{booktitle}{\emph{2022 IEEE International Conference on Web Services (ICWS)}}. \bibinfo{pages}{256--265}.
\newblock
\urldef\tempurl%
\url{https://doi.org/10.1109/ICWS55610.2022.00047}
\showDOI{\tempurl}


\end{thebibliography}

\end{document}